\numberwithin{equation}{section}
\theoremstyle{definition}
\theoremstyle{plain}
\newtheorem{theorem}{Theorem}[section]
\newtheorem{lemma}{Lemma}[section]
\newtheorem{proposition}{Proposition}[section]
\newcommand{\numbereqn}{\addtocounter{equation}{1}\tag{\theequation}} 
\DeclareMathOperator*{\argmin}{arg\,min}
\algnewcommand{\Inputs}[1]{%
  \State \textbf{Inputs:}
  \Statex \hspace*{\algorithmicindent}\parbox[t]{.8\linewidth}{\raggedright #1}
}
\algnewcommand{\Initialize}[1]{%
  \State \textbf{Initialize:}
  \Statex \hspace*{\algorithmicindent}\parbox[t]{.8\linewidth}{\raggedright #1}
}
\begin{document}


%
%
%
%
%
\hypersetup{linkcolor=blue}

\date{\today}

\author{Ray Bai \thanks{Ray Bai (email: raybai07@ufl.edu) is Postdoctoral Researcher, Department of Biostatistics, Epidemiology, and Informatics, University of Pennsylvania.} \\
Malay Ghosh  \thanks{Malay Ghosh  (email: ghoshm@ufl.edu) is Distinguished Professor, Department of Statistics, University of Florida.} \\ }

\title{On the Beta Prime Prior for Scale Parameters in High-Dimensional Bayesian Regression Models \thanks{Keywords and phrases:
{empirical Bayes},
{high-dimensional data},
{linear regression},
{shrinkage estimation},
{scale mixtures of normal distributions},
{posterior contraction}
}
}

\maketitle
\begin{abstract}
We study high-dimensional Bayesian linear regression with a general beta prime distribution for the scale parameter. Under the assumption of sparsity, we show that appropriate selection of the hyperparameters in the beta prime prior leads to the (near) minimax posterior contraction rate when $p \gg n$.  For finite samples, we propose a data-adaptive method for estimating the hyperparameters based on marginal maximum likelihood (MML). This enables our prior to adapt to both sparse and dense settings, and under our proposed empirical Bayes procedure, the MML estimates are never at risk of collapsing to zero. We derive efficient Monte Carlo EM and variational EM algorithms for implementing our model, which are available in the \textsf{R} package \texttt{NormalBetaPrime}. Simulations and analysis of a gene expression data set illustrate our model's self-adaptivity to varying levels of sparsity and signal strengths.
 \end{abstract}

%
%
%
%

\section{Introduction} \label{intro}

\subsection{Background} \label{background}
\noindent
Consider the classical linear regression model,
\begin{equation} \label{y=Xbeta+epsilon}
\bm{y} = \bm{X} \bm{\beta} + \bm{\epsilon}, \numbereqn
\end{equation}
where $\bm{y}$ is an $n$-dimensional response vector, $\bm{X}_{n \times p} = [ \bm{X}_1, \ldots, \bm{X}_p ]$ is a fixed regression matrix with $n$ samples and $p$ covariates, $\bm{\beta} = (\beta_1, \ldots, \beta_p)'$ is a $p$-dimensional vector of unknown regression coefficients, and $\bm{\epsilon} \sim \mathcal{N}( \bm{0}, \sigma^2 \bm{I}_n)$, where $\sigma^2$ is the unknown variance. Throughout this paper, we assume that $\bm{y}$ and $\bm{X}$ have been centered at 0 so there is no intercept in our model.

In recent years, the high-dimensional setting when $p \gg n$ has received considerable attention. This scenario is now routinely encountered in areas as diverse as medicine, astronomy, and finance, just to name a few. In the Bayesian framework, there have been numerous methods proposed to handle the ``large $p$, small $n$'' scenario, including spike-and-slab priors with point masses at zero (e.g., \cite{MartinMessWalker2017, CastilloSchmidtHiebervanderVaart2015, YangWainwrightJordan2016}), continuous spike-and-slab priors (e.g., \cite{NarisettyHe2014, RockovaGeorge2018}), nonlocal priors (e.g. \cite{JohnsonRossell2012, RossellTelesca2017, ShinBhattacharyaJohnson2018}), and scale-mixture shrinkage priors (e.g. \cite{VanDerPasSalmondSchmidtHieber2016, SongLiang2017}). These priors have been shown to have excellent empirical performance and possess strong theoretical properties, including model selection consistency, (near) minimax posterior contraction, and Bernstein-von Mises theorems, among others. In this paper, we will restrict our focus to the scale-mixture shrinkage approach.

In the Bayesian literature, a popular method for estimating $\bm{\beta}$ in (\ref{y=Xbeta+epsilon}) when $p > n$ is to place scale-mixture shrinkage priors on the coefficients of interest and a prior on unknown variance, $\sigma^2$. These priors typically take the form,
\begin{equation} \label{scalemixture}
\begin{array}{c}
\beta_i | ( \sigma^2, \omega_i^2 ) \sim \mathcal{N}(0, \sigma^2 \omega_i^2), i = 1, \ldots, p, \\
\omega_i^2 \sim \pi(\omega_i^2), i = 1, \ldots, p, \\
\sigma^2 \sim \mu(\sigma^2),
\end{array}
\end{equation}
where $\pi: [0, \infty) \rightarrow [0, \infty)$ is a density on the positive reals. Priors of the form (\ref{scalemixture}) have been considered by many authors, e.g.,  \cite{ParkCasella2008, CarvalhoPolsonScott2010, GriffinBrown2010, BhattacharyaPatiPillaiDunson2015, BhadraDattaPolsonWillard2017, polsonscott2012, ArmaganClydeDunson2011, GriffinBrown2013, ArmaganDunsonLee2013}. 

From a computational perspective, scale-mixture priors are very attractive. Discontinuous spike-and-slab priors require searching over $2^p$ models, while continuous spike-and-slab and nonlocal priors result in multimodal posteriors. As a result, Markov chain Monte Carlo (MCMC) algorithms are prone to being trapped at a local posterior mode, and MCMC can suffer from slow convergence for these models. Scale-mixture shrinkage priors, on the other hand, do not face these drawbacks because they are continuous and typically give rise to unimodal posteriors. Additionally, there have been recent advances for fast MCMC sampling from normal scale-mixture priors that scale linearly in time with $p$, e.g. \cite{BhattacharyaChakrabortyMallick2016, JohndrowOrensteinBhattacharya2017}. 

Bayesian scale-mixture priors have been studied primarily under sparsity assumptions.  If sparse recovery of $\bm{\beta}$ is desired, the prior $\pi(\cdot)$ can be constructed so that it contains heavy mass around zero and heavy tails. This way, the posterior density $\pi(\bm{\beta} | \bm{Y})$ is also heavily concentrated around $\bm{0} \in \mathbb{R}^p$, while the heavy tails correctly identify and prevent overshrinkage of the true active covariates. 

While sparsity is often a reasonable assumption, it is not always appropriate, nor is there any ironclad reason to believe that sparsity is the true phenomenon. \citet{ZouHastie2005} demonstrated a practical example where the assumption of sparsity is violated: in microarray experiments with highly correlated predictors, it is often desirable for all genes which lie in the same biological pathway to be selected as a group, even if the final model is not parsimonious. \citet{ZouHastie2005} introduced the elastic net to overcome the inability of the LASSO \cite{Tibshirani1996} to select more than $n$ variables. In the Bayesian literature, there seems to be little study of the appropriateness of scale-mixture priors (\ref{scalemixture}) in dense settings. Ideally, we would like our priors on $\bm{\beta}$ in (\ref{y=Xbeta+epsilon}) to be able to handle \textit{both} sparse and non-sparse situations.

Another important issue to consider is the selection of hyperparameters in our priors on $\bm{\beta}$.  The empirical performance of Bayesian methods can be very sensitive to the choice of hyperparameters. Many authors, e.g. \cite{NarisettyHe2014, YangWainwrightJordan2016, MartinMessWalker2017}, have proposed fixing hyperparameters \textit{a priori} based on asymptotic arguments (such as consistency or minimaxity) or by minimizing some score function such as Bayesian information criterion (BIC) or deviance information criterion (DIC) (e.g. \cite{SongLiang2017, SpiegelhalterBestCarlinVanDerLinde2002}). In this paper, we will argue in favor of a different approach based on marginal maximum likelihood (MML) estimation. Our approach avoids the need for tuning by the user and allows our model to automatically adapt to the true underlying sparsity.

In this paper, we consider a scale mixture model (\ref{scalemixture}) with the beta prime density as the scale prior. We call our model the normal-beta prime (NBP) prior. \citet{BaiGhoshTesting2018} previously studied the NBP model in the context of multiple hypothesis testing of normal means.  Here, we extend the NBP prior to high-dimensional linear regression (\ref{y=Xbeta+epsilon}). Our main contributions are summarized as follows:
\begin{itemize}
\item
We show that for high-dimensional linear regression, the NBP model can serve as both a sparse \textit{and} a non-sparse prior. We prove that under sparsity and appropriate regularity conditions, the NBP prior asymptotically obtains the (near) minimax posterior contraction rate. 
\item
In the absence of prior knowledge about sparsity or non-sparsity, we propose an empirical Bayes variant of the NBP model which enables our model to be \textit{self-adaptive} and learn the true sparsity level from the data. Under our procedure, the hyperparameter estimates are never at risk of collapsing to zero. This is not the case for many other choices of priors, where empirical Bayes estimates can often result in estimates of zero and thus degenerate priors. 
\item
We derive efficient Monte Carlo EM and variational EM algorithms for implementing the self-adaptive NBP model. Our algorithms embed the EM algorithm for estimating the hyperparameters into posterior simulation updates, so that the hyperparameters do not need to be tuned separately.
\end{itemize}

\subsection{Notation}
We use the following notations for the rest of the paper. Let $\{ a_n \}$ and $\{ b_n \}$ be two  non-negative sequences of real numbers indexed by $n$, where $b_n \neq 0$ for sufficiently large $n$. We write $a_n \asymp b_n$ to denote $0 < \lim \inf_{n \rightarrow \infty} a_n/b_n \leq \lim \sup_{n \rightarrow \infty} a_n/b_n < \infty$. If $\lim_{n \rightarrow \infty} a_n/b_n = 0$, we write $a_n = o(b_n)$ or $a_n \prec b_n$. We use $a_n \lesssim b_n$ or $a_n = O(b_n)$ to denote that for sufficiently large $n$, there exists a constant $C >0$ independent of $n$ such that $a_n \leq Cb_n$ respectively. 

For a vector $\bm{v} \in \mathbb{R}^p$, we let $|| \bm{v} ||_0 := \sum_{i=1}^p \bm{1}(v_i \neq 0)$, $|| \bm{v} ||_1 := \sum_{i=1}^p |v_i|$, and $|| \bm{v} ||_2 := \sqrt{ \sum_{i=1}^p v_i^2}$ denote the $\ell_0$, $\ell_1$, and $\ell_2$ norms respectively. For a set $\mathcal{A}$, we denotes its cardinality as $| \mathcal{A} |$.

\section{The Normal-Beta Prime (NBP) Model} \label{NBPPrior}
The beta prime density is given by
\begin{equation} \label{betaprime}
\pi(\omega_i^2) = \frac{\Gamma(a+b)}{\Gamma(a) \Gamma(b)} (\omega_i^2)^{a-1} ( 1 + \omega_i^2)^{-a-b}. 
\end{equation}
 In particular, setting $a=b=0.5$ in (\ref{betaprime}) yields the half-Cauchy prior $\mathcal{C}^+ (0,1)$ for $\omega_i$. In the normal means setting with $\bm{X} = \bm{I}$, $p = n$, and $\sigma^2 = 1$ in (\ref{y=Xbeta+epsilon}), \citet{polsonscott2012} conducted numerical experiments for different combinations of $(a, b)$ in (\ref{betaprime}) and argued that the half-Cauchy prior should be a default prior for the top-level scale parameter in Bayesian hierarchical models.  \citet{PerezPericchiRamirez2017} also generalized the beta prime density (\ref{betaprime}) to the Scaled Beta2 (SBeta2) family of scale priors by adding an additional scaling parameter to (\ref{betaprime}). \citet{polsonscott2012} and  \citet{PerezPericchiRamirez2017} did not consider linear regression models under general design matrices. 

The normal-beta prime (NBP) model is as follows. Suppose we place a normal-scale mixture prior with the beta prime density (\ref{betaprime}) as the scale parameter for each of the individual coefficients in $\bm{\beta} = (\beta_1, \ldots, \beta_p)$ and the usual inverse gamma prior $\mathcal{IG}(c, d)$ prior on $\sigma^2$, where $c, d > 0$. Letting $\beta ' (a, b)$ denote the beta prime distribution (\ref{betaprime}) with hyperparameters $a > 0$, $b > 0$, our Bayesian hierarchy is as follows:
\begin{equation} \label{NBPhier}
\begin{array}{cl}
\beta_i | \omega_i^2, \sigma^2 \sim \mathcal{N}(0, \sigma^2 \omega_i^2), & i = 1, \ldots, p, \\
\omega_i^2 \sim \beta ' (a, b), & i = 1, \ldots, p, \\
\sigma^2 \sim \mathcal{IG}(c, d), &
\end{array}
\end{equation}
For our model (\ref{NBPhier}), we can choose very small values of $c$ and $d$ in order to make the prior on $\sigma^2$ relatively noninfluential and noninformative (e.g., a good default choice is $c=d=10^{-5}$). The most critical hyperparameter choices governing the performance of our model are $(a, b)$.
\begin{proposition}
\label{Prop:2.1}
Suppose that we endow $(\bm{\beta}, \sigma^2)$ with the priors in (\ref{NBPhier}). Then the marginal distribution, $\pi(\beta_i | \sigma^2), i = 1, \ldots p,$  is unbounded with a singularity at zero for any $0 < a \leq 1/2$. 
\end{proposition}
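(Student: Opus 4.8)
The plan is to write the marginal prior $\pi(\beta_i \mid \sigma^2)$ as a one-dimensional integral against the beta prime mixing density and then analyze its behavior as $\beta_i \to 0$. Integrating out $\omega_i^2$ in (\ref{NBPhier}) using (\ref{betaprime}) and writing $u = \omega_i^2$ for the integration variable,
\begin{equation*}
\pi(\beta_i \mid \sigma^2) \;=\; \frac{\Gamma(a+b)}{\Gamma(a)\Gamma(b)}\,\frac{1}{\sqrt{2\pi\sigma^2}} \int_0^\infty u^{\,a-3/2}\, \exp\!\left(-\frac{\beta_i^2}{2\sigma^2 u}\right) (1+u)^{-a-b}\, du .
\end{equation*}
First I would record that this integral is finite for every $\beta_i \neq 0$, so that the object is a genuine density whose only issue is at the origin: near $u=0$ the Gaussian factor $\exp(-\beta_i^2/(2\sigma^2 u))$ decays faster than any power of $u$ and hence tames the (possibly non-integrable) factor $u^{a-3/2}$, while near $u=\infty$ the integrand behaves like $u^{-3/2-b}$, which is integrable because $b>0$.

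For the blow-up at zero, the cleanest route is a direct lower bound. Fix $0<|\beta_i|<1$ and restrict the integral to $u \in [\beta_i^2, 1]$; on this range one has $\beta_i^2/(2\sigma^2 u) \le 1/(2\sigma^2)$ and $(1+u)^{-a-b} \ge 2^{-a-b}$, so that
\begin{equation*}
\pi(\beta_i \mid \sigma^2) \;\ge\; C_{a,b,\sigma} \int_{\beta_i^2}^{1} u^{\,a-3/2}\, du,
\qquad C_{a,b,\sigma} := \frac{\Gamma(a+b)}{\Gamma(a)\Gamma(b)}\,\frac{e^{-1/(2\sigma^2)}}{\sqrt{2\pi\sigma^2}}\, 2^{-a-b} \;>\;0 .
\end{equation*}
Evaluating the elementary integral, for $0<a<1/2$ it equals $(1-2a)^{-1}\bigl(|\beta_i|^{\,2a-1}-1\bigr)$, which tends to $+\infty$ since $2a-1<0$, and for $a=1/2$ it equals $-2\log|\beta_i| \to +\infty$. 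Either way $\pi(\beta_i \mid \sigma^2) \to \infty$ as $\beta_i \to 0$, which is exactly the assertion; the same computation also exhibits the order of the singularity, $|\beta_i|^{2a-1}$ when $a<1/2$ and $\log(1/|\beta_i|)$ when $a=1/2$.

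As an alternative one could instead let $\beta_i \to 0$ along a sequence with $\beta_i^2$ decreasing and invoke the monotone convergence theorem: the integrand increases pointwise to $u^{a-3/2}(1+u)^{-a-b}$, and $\int_0^\infty u^{a-3/2}(1+u)^{-a-b}\,du$ diverges precisely because $u^{a-3/2}$ fails to be integrable near $0$ when $a-3/2 \le -1$, i.e.\ $a \le 1/2$ (integrability at $\infty$ still holding since $b>0$). I do not expect a serious obstacle here; the only point needing a little care is the boundary case $a=1/2$, where the offending factor $u^{-1}$ is only logarithmically non-integrable, but this still forces divergence. The complementary fact that the marginal is bounded at the origin when $a>1/2$ is not claimed in the proposition, but would follow immediately from the displayed integral being finite at $\beta_i=0$.
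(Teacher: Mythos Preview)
Your argument is correct. The integral representation is right, the finiteness for $\beta_i\neq 0$ is properly justified at both endpoints, and the lower bound obtained by restricting to $u\in[\beta_i^2,1]$ cleanly forces divergence as $\beta_i\to 0$ in both the case $a<1/2$ and the borderline case $a=1/2$. (A trivial remark: the antiderivative gives $\int_{\beta_i^2}^1 u^{a-3/2}\,du = \tfrac{2}{1-2a}\bigl(|\beta_i|^{2a-1}-1\bigr)$ for $a<1/2$, so your displayed constant is off by a factor of $2$, but this is immaterial.) The monotone-convergence alternative you sketch is also valid and arguably the quickest route.

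As for comparison with the paper: the paper does not actually supply a proof of Proposition~\ref{Prop:2.1}; it simply cites Proposition~2.1 of \citet{BaiGhoshTesting2018}. So your self-contained argument goes beyond what the present paper offers, and there is nothing to compare against here. Your approach is standard for this kind of normal scale-mixture computation and would be entirely appropriate as an in-text proof.
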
 
\begin{proof} 
See Proposition 2.1 in \citet{BaiGhoshTesting2018}.
\end{proof}
Proposition \ref{Prop:2.1} implies that in order to facilitate sparse recovery of $\bm{\beta}$, we should set the hyperparameter $a$ to be a small value. This would force the NBP prior to place most of its mass near zero, and thus, the posterior $\pi(\bm{\beta} | \bm{y})$ would also be concentrated near $\bm{0} \in \mathbb{R}^p$. 

\begin{figure}[t!]
\centering
\includegraphics[scale=0.6]{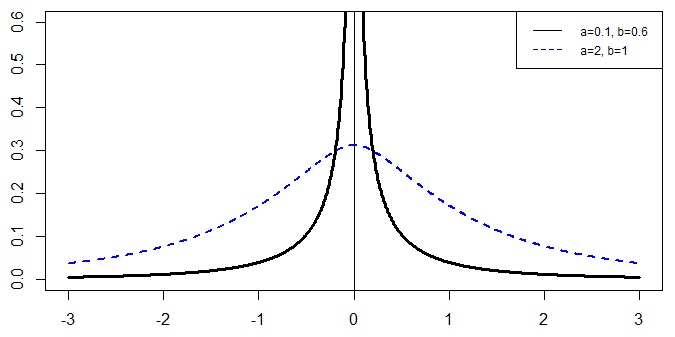} 
\caption{The marginal densities of the NBP prior, $\pi( \beta | \sigma^2)$, with $\sigma^2 = 1$. A smaller $b$ leads to a pole at zero and most of the mass near zero, whereas a large $b$ removes the singularity and leads to heavier tails. }
\label{fig:1}
\end{figure}

Figure \ref{fig:1} plots the marginal density, $\pi ( \beta | \sigma^2)$, for a single $\beta$, where we set $\sigma^2 = 1$ for illustration. When $a = 0.1$, the marginal density contains a singularity at zero, and the probability mass is heavily concentrated near zero. However, when $a = 2$, the marginal density does not contain a pole at zero, and the tails are significantly heavier. 

Figure \ref{fig:1} shows that the NBP model can serve as both a sparse and a non-sparse prior. If we have a priori knowledge that the true model is sparse with a few large signal values, we can fix $a$ to be a very small value. On the other hand, if the true model is known to be dense, we can set $a$ to a larger value, so we have a more diffuse prior. Then there would be less shrinkage of individual covariates in the posterior distribution. In Section \ref{mmle}, we will introduce the self-adaptive NBP model, which automatically learns the true sparsity level from the data and avoids the need for tuning by the user. 

\section{Posterior Contraction Rates Under the NBP Prior} \label{contractionrates}

\subsection{Preliminaries}

For our theoretical analysis of the NBP prior, we shall be principally concerned with the case when $p$ diverges to infinity as sample size $n$ grows and the underlying model is sparse. For the remainder of this section, we rewrite $p$ as $p_n$ to emphasize its dependence on $n$. We work under the frequentist assumption that there is a true data-generating model, i.e.,
\begin{equation} \label{truemodel}
\bm{y}_n = \bm{X}_n \bm{\beta}_{0} + \bm{\epsilon}_n,
\end{equation}
where $\bm{\epsilon}_n \sim \mathcal{N}(\bm{0}, \sigma_0^2 \bm{I}_n)$ and $\sigma_0^2$ is a fixed noise parameter.

Let $s_n = || \bm{\beta}_0 ||_0$ denote the size of the true model, and suppose that $s_n = o( n / \log p_n)$. Under (\ref{truemodel}) and appropriate regularity conditions, \citet{RaskuttiWainwrightYu2011} showed that the minimax estimation rate for any point estimator $\widehat{\bm{\beta}}$ of $\bm{\beta}_0$ under $\ell_2$ error loss is $\sqrt{s_n \log (p_n / s_n ) /n }$. Many frequentist point estimators such as the LASSO \cite{Tibshirani1996} estimator have been shown to attain the \textit{near}-minimax rate of $\sqrt{s_n \log p_n /n }$ under $\ell_2$ error loss.

In the Bayesian paradigm, on the other hand, we are mainly concerned with the rate at which the \textit{entire} posterior distribution contracts around the true $\bm{\beta}_0$. Letting $\mathbb{P}_0$ denote the probability measure underlying (\ref{truemodel}) and $\Pi( \bm{\beta} | \bm{y}_n)$ denote the posterior distribution of $\bm{\beta}$, our aim is to find a positive sequence $r_n$ such that
\begin{equation*}
\Pi ( \bm{\beta}: || \bm{\beta} - \bm{\beta}_0 || \geq M r_n | \bm{y}_n ) \rightarrow 0 \textrm{ a.s. } \mathbb{P}_0 \textrm{ as } n \rightarrow \infty, 
\end{equation*}
for some constant $M > 0 $. The frequentist minimax convergence rate is a useful benchmark for the speed of contraction $r_n$, since the posterior cannot contract faster than the minimax rate \cite{GhosalGhoshVanDerVaart2000}. 

Additionally, we are interested in posterior \textit{compressibility} \cite{BhattacharyaPatiPillaiDunson2015}, which allows us to quantify how well the NBP posterior captures the true sparsity level $s_n$. Since the NBP prior is absolutely continuous, it assigns zero mass to exactly sparse vectors. To approximate the model size for the NBP model, we use the following generalized notion of sparsity \cite{Rockova2018, RockovaGeorge2018, BhattacharyaPatiPillaiDunson2015}. Letting $\delta$ be some positive constant (to be specified later), we define the generalized inclusion indicator and generalized dimensionality, respectively, as
\begin{equation} \label{generalizeddimensionality}
\gamma_{\delta} (\beta) = I( | \beta / \sigma | > \delta) \textrm{ and } | \bm{\gamma}_{\delta} ( \bm{\beta} ) | = \displaystyle \sum_{i=1}^{p_n} \gamma_{\delta} (\beta_i).
\end{equation}
The generalized dimensionality counts the number of covariates in $\bm{\beta}/ \sigma$ that fall outside the interval $[-\delta, +\delta]$. With appropriate choice of $\delta$, the prior is said to have the posterior compressibility property if the probability that $| \bm{\gamma}_{\delta} (\bm{\beta}) |$ asymptotically exceeds a constant multiple of the true sparsity level $s_n$ tends to 0 as $n \rightarrow \infty$, i.e.
\begin{equation*} 
\Pi( \bm{\beta}: | \bm{\gamma}_{\delta} ( \bm{\beta} ) | \geq A s_n | \bm{y}_n ) \rightarrow 0 \textrm{ a.s. } \mathbb{P}_0 \textrm{ as } n \rightarrow \infty,
\end{equation*}
for some constant $ A> 0$.

\subsubsection{Near-Minimax Posterior Contraction Under the NBP Prior} \label{minimaxregcontraction}

We first introduce the following set of regularity conditions, which come from \citet{SongLiang2017} and which are fairly standard in the high-dimensional literature. As before, $s_n$ denotes the size of the true model, while $\lambda_{\min} (\bm{A})$ denotes the minimum eigenvalue of a symmetric matrix $\bm{A}$.

\subsubsection*{Regularity conditions }
\begin{enumerate}[label=(A\arabic*)]
\item
All the covariates are uniformly bounded. For simplicity, we assume they are all bounded by 1. \label{As:A1}
\vspace{-.2cm}
\item
$p_n \gg n$. \label{As:A2}
\vspace{-.2cm}
\item
Let $\xi \subset \{ 1, \ldots, p_n \}$, and let $\bm{X}_{\xi}$ denote the submatrix of $\bm{X}_n$ that contains the columns with indices in $\xi$. There exists some integer $\bar{p}$ (depending on $n$ and $p_n)$ and fixed constant $t_0$ such that $\bar{p} \succ s$ and $\lambda_{\min} ( \bm{X}_{\xi}^\top \bm{X}_{\xi} ) \geq n t_0$ for any model of size $| \xi | \leq \bar{p}$. \label{As:A3} 
\vspace{-.2cm}
\item
$s_n = o(n / \log p_n)$. \label{As:A4} 
\vspace{-.2cm}
\item
$\max_j \{ | \beta_{0j} / \sigma_0 | \} \leq \gamma_3 E_n$ for some $\gamma_3 \in (0, 1)$, and $E_n$ is nondecreasing with respect to $n$.  \label{As:A5}
\end{enumerate}

\noindent Assumption \ref{As:A3} is a minimum restricted eigenvalue (RE) condition which ensures that $\bm{X}_n^\top \bm{X}_n$ is locally invertible over sparse sets. When $p_n \gg n$, minimum RE conditions are imposed to render $\bm{\beta}_0$ estimable. Assumption \ref{As:A4} restricts the growth of $s_n$, and \ref{As:A5} constrains the size of the signals in $\bm{\beta}_0$ to be $O(E_n)$ for some nondecreasing sequence $E_n$.

As we illustrated in Section \ref{NBPPrior}, the hyperparameter $a$ is mainly what controls the amount of posterior mass around 0 for each coefficient $\beta_i, i = 1, \ldots, p_n$, under the NBP prior. Hence, it will play a crucial role in our theory. We rewrite $a$ as $a_n$ to emphasize its dependence on $n$.

\begin{theorem}
\label{Th:3.1}
Assume that Assumptions \ref{As:A1}-\ref{As:A5} hold, with $\log (E_n) = O(\log p_n)$ for Assumption \ref{As:A5}. Let $r_n = M \sqrt{s_n \log p_n / n}$ for some fixed constant $M > 0$, and let $k_n \asymp (\sqrt{s_n \log p_n / n }) / p_n$. Suppose that we place the NBP prior (\ref{NBPhier}) on $(\bm{\beta}, \sigma^2)$, with $a_n \lesssim k_n^2 p_n^{-(1+u)}$, for some $u > 0$, and $b \in (1, \infty)$. Then under (\ref{truemodel}), the following hold:
\begin{equation} \label{L2loss}
\Pi \left( \bm{\beta}: || \bm{\beta} - \bm{\beta}_0 ||_2 \geq c_1  \sigma_0 r_n | \bm{y}_n \right) \rightarrow 0 \textrm{ a.s. } \mathbb{P}_0 \textrm{ as } n\rightarrow \infty,
\end{equation}
\begin{equation} \label{L1loss}
\Pi \left( \bm{\beta}: || \bm{\beta} - \bm{\beta}_0 ||_1 \geq c_1 \sigma_0 \sqrt{s} r_n | \bm{y}_n \right) \rightarrow 0 \textrm{ a.s. } \mathbb{P}_0 \textrm{ as } n \rightarrow \infty,
\end{equation}
\begin{equation} \label{predictionloss}
\Pi \left( \bm{\beta}: || \bm{X} \bm{\beta} - \bm{X} \bm{\beta}_0 ||_2 \geq c_0 \sigma_0 \sqrt{n} r_n | \bm{Y}_n \right) \rightarrow 0 \textrm{ a.s. } \mathbb{P}_0 \textrm{ as } n \rightarrow \infty,
\end{equation}
\begin{equation} \label{compressibility}
\Pi \left( \bm{\beta}: | \bm{\gamma}_{k_n} ( \bm{\beta} ) | \geq \widetilde{q}_n | \bm{y}_n \right) \rightarrow 0 \textrm{ a.s. } \mathbb{P}_0 \textrm{ as } n \rightarrow \infty,
\end{equation}
where $c_0 > 0, c_1 > 0$, $| \gamma_{k_n}(\bm{\beta}) | = \sum_{i=1}^{p} I( | \beta_i / \sigma | > k_n)$, and $\widetilde{q}_n \asymp s_n$.
\end{theorem}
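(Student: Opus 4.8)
The plan is to deduce all four displays from the general posterior contraction framework of \citet{SongLiang2017} for scale-mixture shrinkage priors, so that the work reduces to checking their sufficient conditions for the particular marginal prior induced by the NBP hierarchy (together with the innocuous observation that $\mathcal{IG}(c,d)$ on $\sigma^2$ has full support on $(0,\infty)$ with a continuous positive density near $\sigma_0^2$, which is all their conditions on the variance prior require). Writing $g_n(\cdot) := \pi(\beta_i \mid \sigma^2)$ for the marginal density of a single coordinate, everything comes down to two analytic facts about $g_n$: a \emph{tail/sparsity bound} controlling $\pi(|\beta_i/\sigma| > k_n)$, which drives the compressibility statement \eqref{compressibility} and the ``prior mass on overly large models'' half of the argument; and a \emph{prior concentration bound}, a polynomial-in-$p_n$ lower bound on $g_n$ uniform over $[-\sigma_0 E_n,\sigma_0 E_n]$, which guarantees that the prior assigns mass at least $e^{-C n r_n^2} = e^{-C s_n \log p_n}$ to an $\ell_2$-ball of radius $r_n$ about $\bm{\beta}_0$. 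Once these are in place, the theorem of \citet{SongLiang2017} yields \eqref{L2loss}--\eqref{predictionloss} together, with the $\ell_1$ and prediction bounds following from the $\ell_2$ bound and the restricted-eigenvalue condition \ref{As:A3}.

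For the tail bound I would use the Gaussian scale-mixture representation, writing $\beta_i = \sigma\,\omega_i\,z$ in distribution with $z \sim \mathcal N(0,1)$ and $\omega_i^2 \sim \beta'(a_n,b)$. Since the beta prime density has polynomial right tail of order $(\omega_i^2)^{-(1+b)}$, a direct Laplace-type estimate of the mixture integral gives $\pi(|\beta_i/\sigma| > t \mid \sigma^2) \asymp \dfrac{\Gamma(a_n+b)}{\Gamma(a_n)\Gamma(b)}\, t^{-2b}$ as $t \to \infty$; here the hypothesis $b > 1$ is used to make the tail exponent large enough that the $p_n$ coordinatewise tail probabilities are summable (equivalently, $\mathbb E[\omega_i^2] = a_n/(b-1) < \infty$). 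Substituting $t = k_n \asymp r_n/p_n$ and the prescribed order of $a_n$, and applying a Chernoff bound to the sum of the $p_n$ independent generalized-inclusion indicators $\gamma_{k_n}(\beta_i)$, yields both $\widetilde{q}_n \asymp s_n$ in \eqref{compressibility} and the exponentially small prior probability of the sieve complement $\{\bm{\beta} : |\bm{\gamma}_{k_n}(\bm{\beta})| > A s_n\}$.

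For the concentration bound, the key step is to observe that $\Gamma(a_n+b)/(\Gamma(a_n)\Gamma(b)) \asymp a_n$ as $a_n \to 0$, since $\Gamma(a_n) \sim 1/a_n$. Localizing the mixture integral to $\omega_i^2 \asymp (\beta_i/\sigma)^2$ then gives a uniform lower bound $g_n(\beta_i) \gtrsim a_n\,(1 + |\beta_i/\sigma|)^{-(1+2b)}$ on compact sets, so that for an active coordinate $\pi(|\beta_i - \beta_{0i}| \le r_n/\sqrt{s_n} \mid \sigma^2) \gtrsim (r_n/\sqrt{s_n})\, a_n\, E_n^{-(1+2b)}$. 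Taking logarithms, using $\log E_n = O(\log p_n)$, $-\log a_n = O(\log p_n)$ for $a_n$ taken of the stated order (since $\log(1/k_n) = O(\log p_n)$), $\log(1/r_n) = O(\log p_n)$, and bounding the contribution of the $p_n - s_n$ near-zero coordinates via the concentration of $\beta'(a_n,b)$ around the origin, one obtains the required $e^{-C s_n \log p_n}$ prior-mass lower bound on a radius-$r_n$ $\ell_2$ ball about $\bm{\beta}_0$.

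The main obstacle is calibrating $(a_n,b)$ so that these two requirements hold \emph{simultaneously}. The tail/sparsity bound pushes $a_n$ to be very small (and tolerates larger $b$), so that the null coordinates are shrunk below $k_n$ with overwhelming probability; the concentration bound, on the other hand, forbids $a_n$ from being too small and $b$ from being too large, lest the active coordinates --- which may be as large as $\sigma_0 E_n$ --- receive sub-polynomial prior mass. Verifying that the prescribed regime $a_n \lesssim k_n^2 p_n^{-(1+u)}$ with $b \in (1,\infty)$ threads this needle requires carefully tracking the $a_n$-dependence of the beta prime normalizing constant and of the induced marginal density near both the singularity at $0$ and the polynomial tail; the remaining ingredients (metric entropy of the sieve, existence of exponentially consistent tests, and assembling the Ghosal--Ghosh--van der Vaart bound) are routine given \citet{SongLiang2017}.
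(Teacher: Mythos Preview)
Your overall strategy --- reduce to the two conditions on the marginal density $g$ in \citet{SongLiang2017} and verify them for the NBP --- is exactly what the paper does, and your treatment of the lower bound on $g$ over $[-E_n,E_n]$ (using $\Gamma(a_n+b)/(\Gamma(a_n)\Gamma(b)) \asymp a_n$ and a polynomial-tail estimate $g(x)\gtrsim a_n(1+|x|)^{-(1+2b)}$) matches the paper's argument almost line for line.

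The tail/sparsity step, however, has a real gap. You invoke the large-$t$ asymptotic $\pi(|\beta_i/\sigma|>t)\asymp \tfrac{\Gamma(a_n+b)}{\Gamma(a_n)\Gamma(b)}\,t^{-2b}$ and then substitute $t=k_n$. But $k_n\asymp r_n/p_n\to 0$, so a $t\to\infty$ asymptotic says nothing at $t=k_n$; worse, even if $a_n t^{-2b}$ were a valid upper bound for all $t$, plugging in $a_n\lesssim k_n^2 p_n^{-(1+u)}$ would give $k_n^{2-2b}p_n^{-(1+u)}$, which \emph{diverges} for $b>1$ and hence cannot deliver the required $p_n^{-(1+u)}$ bound under the hypothesis of the theorem. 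What is actually needed is a bound of order $a_n/k_n^2$, not $a_n k_n^{-2b}$. The paper obtains this by writing $1-\int_{-k_n}^{k_n}g = \mathbb E_{\omega^2}[\,\Pr(|Z|>k_n/\omega)\,]$, applying the Gaussian concentration bound $\Pr(|Z|>k_n/\omega)\le 2e^{-k_n^2/2\omega^2}$, and then using stochastic dominance of $\beta'(a_n,b)$ by $\beta'(a_n,1)$ together with the substitution $u=1/\omega^2$ to get $\le 4a_n/k_n^2$. (An even shorter route you might note: since $b>1$ gives $\mathbb E[\omega^2]=a_n/(b-1)$, Chebyshev's inequality already yields $\pi(|\beta_i/\sigma|>k_n)\le a_n/((b-1)k_n^2)$, which is of the right order.) Once you replace your Laplace-tail heuristic with either of these, the rest of your outline goes through; the Chernoff/sieve language you add is not needed, since all four conclusions follow directly from the Song--Liang theorem once the two conditions on $g$ are verified.
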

\begin{proof}
See Supplementary Materials, Appendix \ref{appA}.
\end{proof}

The proof of Theorem \ref{Th:3.1} is based on verifying a set of conditions by \citet{SongLiang2017}. In particular, (\ref{L2loss})-(\ref{predictionloss}) show that by fixing $a_n \lesssim p_n^{-(3+u)} \sqrt{s_n \log p_n / n } $, for some $u > 0$, and $b \in (1, \infty)$ as the hyperparameters $(a_n, b)$ in (\ref{NBPhier}), the NBP model's posterior contraction rates under $\ell_2$, $\ell_1$, and prediction error loss are the familiar near-optimal rates of $\sqrt{s_n \log p_n /n }$, $s_n \sqrt{ \log p_n / n }$, and $\sqrt{ s_n \log p_n }$ respectively. Moreover, by setting $\delta = k_n \asymp (\sqrt{s_n \log p_n / n }) / p_n$ in our generalized inclusion indicator (\ref{generalizeddimensionality}), (\ref{compressibility}) also shows that the NBP possesses posterior compressibility, i.e. the probability that the generalized dimension size $|\bm{\gamma}_{k_n} (\bm{\beta}) |$ is a constant multiple larger than $s_n$ asymptotically vanishes. Note that $k_n \approx 0$ for large $n$, so that $ | \bm{\gamma}_{k_n} (\bm{\beta}) |$ is a very good approximation of the limiting ideal of $ || \bm{\beta} / \sigma ||_0$. 

Our result relies on setting the hyperparameter $a_n$ to be a value dependent upon the unknown sparsity level $s_n$. Previous theoretical results for scale-mixture shrinkage priors, e.g. \cite{VanDerPasSalmondSchmidtHieber2016, SongLiang2017}, also rely on fixing hyperparameters to quantities that depend on $s_n$ in order for these priors to obtain minimax posterior contraction. If we want to \textit{a priori} fix the hyperparameters $(a, b)$ under the NBP prior based on asymptotic arguments, we could first obtain an estimate of $s_n$. For example, we could take $\widehat{s}_n = || \widehat{\bm{\beta}}^{ALasso} ||_0$, where $\widehat{\bm{\beta}}^{ALasso}$ is an adaptive LASSO solution \cite{Zou2006} to (\ref{y=Xbeta+epsilon}) and then set $a_n$ as $a_n := p_n^{-(3+u)} \sqrt{ \widehat{s}_n \log p_n / n }, u>0$. Fixing $a_n := p_n^{-(3+u)} \sqrt{ \log n / n}, u>0$, would also satisfy the conditions in our theorem (since $\log n \prec s_n \log p_n$), thus removing the need to estimate $s_n$.

\section{Empirical Bayes Estimation of Hyperparameters} \label{mmle}

While fixing $(a, b)$ \textit{a priori} as $a = p^{-(3+u)} \sqrt{\log n / n }$, for some $u>0$, and $b \in (1, \infty)$ would lead to (near) minimax posterior contraction under conditions \ref{As:A1}-\ref{As:A5}, this would not allow the NBP prior to adapt to varying patterns of sparsity or signal strengths. The minimum restricted eigenvalue assumption \ref{As:A3} is also computationally infeasible to verify in practice. \citet{DobribanFan2016} showed that, given an arbitrary design matrix $\bm{X}$, verifying that the minimum RE condition holds is an NP-hard problem. Finally, there is no practical way of verifying that the model size condition \ref{As:A4} that $s  = o(n/ \log p)$ holds, or that the true model is even sparse.

For these reasons, we do not recommend fixing the hyperparameters $(a, b)$ in the NBP model based on asymptotic arguments. We instead prefer to \textit{learn} the true sparsity pattern from the given data. One such way to do this is to use marginal maximum likelihood (MML). The marginal likelihood, $f(\bm{y}) = \int f(\bm{y} | \bm{\beta}, \sigma^2) \pi(\bm{\beta}, \sigma^2) d ( \bm{\beta}, \sigma^2 )$, is the probability the model gives to the observed data with respect to the prior (or the ``model evidence''). Hence, choosing the prior hyperparameters $(a, b)$ to maximize $f(\bm{y})$ gives the maximum ``model evidence,'' and the MML can learn the most likely sparsity level (or non-sparsity) from the data. One potential shortcoming with MML is that  it can lead to degenerate priors. However, as we illustrate below, this problem is avoided under the NBP prior.

We propose an EM algorithm to obtain the MML estimates of $(a, b)$. Henceforth, we refer to this empirical Bayes variant of the NBP model as the \textit{self-adaptive} NBP model. Our algorithm can be easily incorporated into Gibbs sampling or mean field variational Bayes (MFVB) algorithms. 

To construct the EM algorithm, first note that because the beta prime density can be rewritten as a product of an independent gamma and inverse gamma densities, we may reparametrize (\ref{NBPhier}) as

\begin{equation} \label{NBPhierreparem}
\begin{array}{cl}
\beta_i | ( \omega_i^2, \lambda_i^2 \xi_i^2 ) \sim \mathcal{N}(0, \sigma^2 \lambda_i^2 \xi_i^2), & i = 1, \ldots, p, \\
\lambda_i^2 \sim \mathcal{G} (a, 1), & i = 1, \ldots, p, \\
\xi_i^2 \sim \mathcal{IG} (b, 1), & i = 1, \ldots, p, \\
\sigma^2  \sim \mathcal{IG}(c, d). &
\end{array}
\end{equation}
The logarithm of the joint posterior under the reparametrized NBP prior (\ref{NBPhierreparem}) is given by 
\begin{align*} \label{logposteriorNBP}
& - \left( \frac{n+p}{2} \right) \log ( 2 \pi) - \left( \frac{n+p}{2} + c+ 1 \right) \log (\sigma^2) - \frac{1}{2 \sigma^2} || \bm{y} - \bm{X} \bm{\beta} ||_2^2  \\
& \qquad - \displaystyle \sum_{i=1}^p \frac{\beta_i^2}{2 \lambda_i^2  \xi_i^2 \sigma^2 } - p \log ( \Gamma(a)) + \left( a - \frac{3}{2} \right) \displaystyle \sum_{i=1}^p \log (\lambda_i^2) - \displaystyle \sum_{i=1}^p \lambda_i^2    \\
& \qquad - p \log (\Gamma(b)) - \left( b + \frac{3}{2} \right) \displaystyle \sum_{i=1}^p \log ( \xi_i^2 ) - \displaystyle \sum_{i=1}^{p} \frac{1}{\xi_i^2} + c \log (d) - \log (\Gamma(c)) - \frac{d}{\sigma^2}.  \numbereqn
\end{align*} 
Thus, at the $k$th iteration the EM algorithm, the conditional log-likelihood on $\nu^{(k-1)} = (a^{(k-1)}, b^{(k-1)})$ and $\bm{y}$ in the E-step is given by
\begin{align*} \label{Estep}
 Q( \nu | \nu^{(k-1)} ) & = -p \log( \Gamma(a)) + a \displaystyle \sum_{i=1}^p \mathbb{E}_{a^{(k-1)}} \left[ \log (\lambda_i^2) | \bm{y} \right] - p (\log \Gamma(b)) \\ 
& - b \displaystyle \sum_{i=1}^p  \mathbb{E}_{b^{(k-1)}} \left[ \log (\xi_i^2) | \bm{y} \right]  + \textrm{ terms not involving } a \textrm{ or } b. \numbereqn
\end{align*}
The M-step maximizes $Q(\nu | \nu^{(k-1)})$ over $\nu = (a, b)$ to produce the next estimate $\nu^{(k)} = (a^{(k)}, b^{(k)})$. That is, we must find $(a, b)$, $a \geq 0, b \geq 0$, such that
\begin{equation} \label{Mstep}
\begin{array}{rl}
\frac{ \partial Q}{ \partial a} & = -p \psi(a) + \displaystyle \sum_{i=1}^p \mathbb{E}_{a^{(k-1)}} \left[ \log (\lambda_i^2) | \bm{y} \right]  =  0, \\
\frac{ \partial Q}{ \partial b}  & = -p \psi(b) - \displaystyle \sum_{i=1}^p \mathbb{E}_{b^{(k-1)}} \left[ \log (\xi_i^2) | \bm{y} \right] = 0, 
\end{array}
\end{equation}
where $\psi(x) = d/dx \left( \Gamma(x) \right)$ denotes the digamma function. We can solve for $(a,b)$ in (\ref{Mstep}) numerically by using a fast root-finding algorithm such as Newton's method. The summands, $\mathbb{E}_{a^{(k-1)}} \left[ \log ( \lambda_i^2 ) | \bm{y} \right]$ and $\mathbb{E}_{b^{(k-1)}} \left[ \log ( \xi_i^2 ) | \bm{y} \right]$, $i = 1, \ldots, p,$ in (\ref{Mstep}) can be estimated from either the mean of $M$ Gibbs samples based on $\nu^{(k-1)}$, for sufficiently large $M > 0$ (as in \cite{Casella2001}), or from the $(k-1)$st iteration of the mean field variational Bayes (MFVB) algorithm (as in \cite{Leday2017}).

\begin{theorem}
\label{Th:4.1}
At every $k$th iteration of the EM algorithm for the self-adaptive NBP model, there exists a unique solution $\nu^{(k)} = (a^{(k)}, b^{(k)})$,  which maximizes (\ref{Estep}) in the M-step. Moreover, $a^{(k)} > 0$, $b^{(k)} > 0$ at the $k$th iteration.
\end{theorem}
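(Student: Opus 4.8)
The plan is to exploit the additive separability of the M-step objective. Set $S_\lambda := \sum_{i=1}^p \mathbb{E}_{a^{(k-1)}}[\log(\lambda_i^2)\mid\bm{y}]$ and $S_\xi := \sum_{i=1}^p \mathbb{E}_{b^{(k-1)}}[\log(\xi_i^2)\mid\bm{y}]$. Dropping from (\ref{Estep}) the terms that do not depend on $(a,b)$, we may write $Q(\nu\mid\nu^{(k-1)}) = Q_1(a) + Q_2(b)$ with $Q_1(a) = -p\log\Gamma(a) + a\,S_\lambda$ and $Q_2(b) = -p\log\Gamma(b) - b\,S_\xi$. Hence the M-step decouples into two independent univariate problems --- maximize $Q_1$ over $a>0$ and $Q_2$ over $b>0$ --- and it suffices to show that each of $Q_1,Q_2$ has a unique maximizer, lying strictly inside $(0,\infty)$.

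First I would record that $Q_1$ and $Q_2$ are finite and smooth on $(0,\infty)$: under the reparametrization (\ref{NBPhierreparem}) the full conditional laws of $\lambda_i^2$ and $\xi_i^2$ are a generalized inverse Gaussian and an inverse gamma, respectively, each possessing finite logarithmic moments, and integrating against the posterior of the remaining parameters preserves this, so $S_\lambda,S_\xi$ are finite (a.s. $\mathbb{P}_0$) and $Q_1,Q_2 \in C^\infty((0,\infty))$.

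The crux is then strict concavity together with the boundary behavior of the digamma function $\psi$. Differentiating, $Q_1'(a) = -p\,\psi(a) + S_\lambda$ and $Q_1''(a) = -p\,\psi'(a)$, where $\psi'$ is the trigamma function; since $\psi'(x) > 0$ on $(0,\infty)$ and $p \ge 1$, we get $Q_1'' < 0$, so $Q_1$ is strictly concave and has at most one stationary point, which if it exists is the global maximizer. Existence follows from $\psi(x) = -x^{-1} + O(1)$ as $x \to 0^+$ and $\psi(x) = \log x + O(1)$ as $x \to \infty$: these force $Q_1'(a) \to +\infty$ as $a \to 0^+$ and $Q_1'(a) \to -\infty$ as $a \to \infty$, whatever the finite value of $S_\lambda$, so by the intermediate value theorem $Q_1'$ has a zero $a^{(k)} \in (0,\infty)$, unique because $Q_1'$ is strictly decreasing. (Equivalently, $\log\Gamma$ grows super-linearly while $\log\Gamma(a)\to+\infty$ as $a\to0^+$, so $Q_1 \to -\infty$ at both ends of $(0,\infty)$ and attains its supremum at an interior critical point.) The same reasoning applied to $Q_2'(b) = -p\,\psi(b) - S_\xi$ yields a unique maximizer $b^{(k)} \in (0,\infty)$. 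Thus $\nu^{(k)} = (a^{(k)}, b^{(k)})$ is the unique maximizer of (\ref{Estep}), it solves the first-order system (\ref{Mstep}), and $a^{(k)} > 0$, $b^{(k)} > 0$ by construction --- the full claim.

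I expect the only genuinely delicate step to be the finiteness of the E-step expectations $S_\lambda,S_\xi$, as this is the one place where the specific structure of the NBP full conditionals enters; ruling out that the maximizer escapes to the boundary $a=0$ or $b=0$ is handled cleanly by the $x\to0^+$ asymptotics of $\psi$ (equivalently, the pole of $\Gamma$ at the origin). No restricted-eigenvalue or sparsity assumptions are needed anywhere --- the statement is ultimately a convex-analysis fact about the Gamma function.
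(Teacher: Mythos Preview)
Your proposal is correct and follows essentially the same route as the paper: both arguments reduce to the fact that the digamma function $\psi$ is a continuous, strictly increasing bijection from $(0,\infty)$ onto $\mathbb{R}$, so that each of the decoupled first-order conditions $\psi(a)=S_\lambda/p$ and $\psi(b)=-S_\xi/p$ has a unique strictly positive solution once the right-hand sides are finite. Your write-up is in fact slightly more complete than the paper's, since you add the strict concavity argument (via $\psi'>0$) to confirm the critical point is a global maximizer rather than merely a stationary point, whereas the paper stops at uniqueness of the root of (\ref{Mstep}).
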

\begin{proof}
See Supplementary Materials, Appendix \ref{appA}.
\end{proof}
Theorem \ref{Th:4.1} ensures that under our setup, we will not encounter the issue of the sparsity parameter $a$ (or the parameter $b$) collapsing to zero. Empirical Bayes estimates of zero are a major concern for MML approaches to estimating hyperparameters in Bayesian linear regression models. For example, in $g$-priors,
\begin{align*}
\bm{\beta} | \sigma^2 \sim \mathcal{N}_p \left( \bm{\gamma},  g \sigma^2  ( \bm{X}^\top \bm{X})^{-1} \right),
\end{align*}
\citet{GeorgeFoster2000} showed that the MML estimate of the parameter $g$ could equal zero. In global-local shrinkage priors of the form,
\begin{align*}
\beta_i | (\lambda_i^2, \sigma^2) \sim \mathcal{N}(0, \sigma^2 \tau^2 \lambda_i^2), \hspace{.2cm} \lambda_i^2 \sim \pi(\lambda_i^2), \hspace{.2cm} i = 1, \ldots, p,
\end{align*}
the variance rescaling parameter $\tau$ is also at risk of being estimated as zero under MML \cite{PolsonScott2010, TiaoTan1965, CarvalhoPolsonScott2009, DattaGhosh2013}. Finally, \citet{ScottBerger2010} proved that if we endow (\ref{y=Xbeta+epsilon}) with a binomial model selection prior,
\begin{align*}
\pi ( \bm{M}_{\gamma} | \theta ) = \theta^{k_{\gamma}} (1-\theta)^{p-k_{\gamma}},
\end{align*}
where $\bm{M}_{\gamma}$ is the model indexed by $\gamma \subset \{1, \ldots, p \}$ and  $k_{\gamma}$ represents the number of included variables in the model, the MML estimate of the mixing proportion $\theta$ could be estimated as either 0 or 1, leading to a degenerate prior. Clearly, the marginal maximum likelihood approach for tuning hyperparameters is not without problems, as it could potentially lead to degenerate priors in high-dimensional regression. However, with the NBP prior, we can easily incorporate a data-adaptive procedure for estimating the hyperparameters while avoiding this potential pitfall. 

In the aforementioned examples, placing priors on $g$, $\tau$, or $\theta$ with strictly positive support or performing cross-validation or \textit{restricted} marginal maximum likelihood estimation over a range of strictly positive values can help to avoid the issue of collapse to zero. The hierarchical Bayes approach does not quite address the issue of misspecification of hyperparameters (since these still need to be specified in the additional priors).  If we use cross-validation over a grid of positive values, the ``optimal'' choice or spacing of grid points is also not clear-cut.

In the general regression setting, it is also unclear what the endpoints should be if we use a truncated range of positive values to estimate hyperparameters from restricted marginal maximum likelihood. Recently, for sparse normal means estimation (i.e. $\bm{X} = \bm{I}$, $p = n$, and $\sigma^2 = 1$ in (\ref{y=Xbeta+epsilon})), \citet{VanDerPasSzaboVanDerVaart2017} advocated using the restricted MML estimator for the sparsity parameter $\tau$ in the range $[1/n, 1]$ for the horseshoe prior \cite{CarvalhoPolsonScott2010}. This choice allows the horseshoe model to obtain the (near) minimax posterior contraction rate. While this choice gives theoretical guarantees for normal means estimation, it does not seem to be justified for high-dimensional regression (\ref{y=Xbeta+epsilon}) when $p \gg n$. Theorem 3.1 in \citet{SongLiang2017} shows that the minimax optimal choice for $\tau$ in the horseshoe under model (\ref{y=Xbeta+epsilon}) satisfies $\tau \lesssim (\sqrt{s_n \log p / n})  p^{-(1+(u+1)/(r-1))}$, where $u > 0$, $r > 0$. It would thus appear that any $\tau \in [1/n, 1]$ would lead to a \textit{suboptimal} contraction rate for sparse high-dimensional regression. In our numerical experiments in Section \ref{Simulations}, we demonstrate that the truncation suggested by \citet{VanDerPasSzaboVanDerVaart2017} leads to worse estimation for the horseshoe under the general linear regression model (\ref{y=Xbeta+epsilon}) (as opposed to normal means estimation).
 
The self-adaptive NBP prior circumvents all of these issues by obtaining the MML estimates of $(a, b)$ over the entire range $[0, \infty) \times [0, \infty)$, while ensuring that $(a, b)$ are never estimated as zero. Thus, the self-adaptive NBP model's automatic selection of hyperparameters through MML provides a practical alternative to hierarchical Bayes, cross-validation, or restricted MML approaches for tuning hyperparameters. 

\subsection{Illustration of the Self-Adaptive NBP Model} \label{selfadaptivity}

We now illustrate the self-adaptive NBP prior's ability to adapt to differing sparsity patterns. We consider two settings: one sparse ($n=60$, $p=100$, 10 nonzero covariates) and one dense ($n = 60$, $p=100$, and 60 nonzero covariates), where the active covariates are drawn from $\mathcal{U}\left( [-2, -0.5] \cup [0.5, 2] \right)$. Our simulations come from experiments 1 and 4 in Section \ref{Simulations}. We initialize $(a^{(0)}, b^{(0)}) = (0.01, 0.01)$ and then implement the Monte Carlo EM algorithm (described in Section \ref{PosteriorSimulation}) for finding the MML estimates of the parameters $(a, b)$, which we denote as $(\widehat{a}, \widehat{b})$. 

In Figure \ref{fig:2}, we plot the iterations from two runs of the EM algorithm.  The algorithm terminates at iteration $k$ when the square of the $\ell_2$ distance between $(a^{(k-1)}, b^{(k-1)})$ and $(a^{(k)}, b^{(k)})$ reaches below $10^{-6}$. We then set $(\widehat{a}, \widehat{b}) = (a^{(k)}, b^{(k)})$. The top panel in Figure \ref{fig:2} plots the paths for $a$ and $b$ from a sparse model with 10 active predictors, and the bottom panel plots the paths for $a$ and $b$ from a dense model with 60 active predictors. In the sparse case, the final MML estimate of $a$ is $\widehat{a} = 0.184$. In the dense case, the final MML estimate of $a$ is $\widehat{a} = 1.104$. 

Figure \ref{fig:3} shows the NBP's marginal density, $\pi( \beta | \widehat{a}, \widehat{b}, \sigma^2)$, for a single coefficient $\beta$ using the MML estimates of $(a, b)$ obtained in sparse and the dense settings respectively. For the purpose of illustration, we have fixed $\sigma^2 = 1$. The left panel depicts the marginal density under the sparse setting (10 active predictors, $(\widehat{a}, \widehat{b}) = (0.184, 1.124)$). We see that the marginal density for $\beta$ contains a singularity at zero, \textit{and} most of the probability mass is around zero. We thus recover a sparse model for $\pi(\bm{\beta} | \bm{y})$ under these MML hyperparameters. Meanwhile, the right panel depicts the marginal density in the dense setting (60 active predictors, $(\widehat{a}, \widehat{b}) = (1.104, 1.645)$). Here, the marginal density for $\beta$ does \textit{not} contain a pole, and more mass is placed in neighborhoods away from zero. Thus, we recover a more dense model. Figures \ref{fig:2} and \ref{fig:3} illustrate that in both cases, the EM algorithm was able to correctly learn the true sparsity (or non-sparsity) from the data and incorporate this into its estimates of the hyperparameters.

\begin{figure}[t!]
	\centering
	\includegraphics[scale=0.5, trim={0 0.5cm 0 0.25cm}]{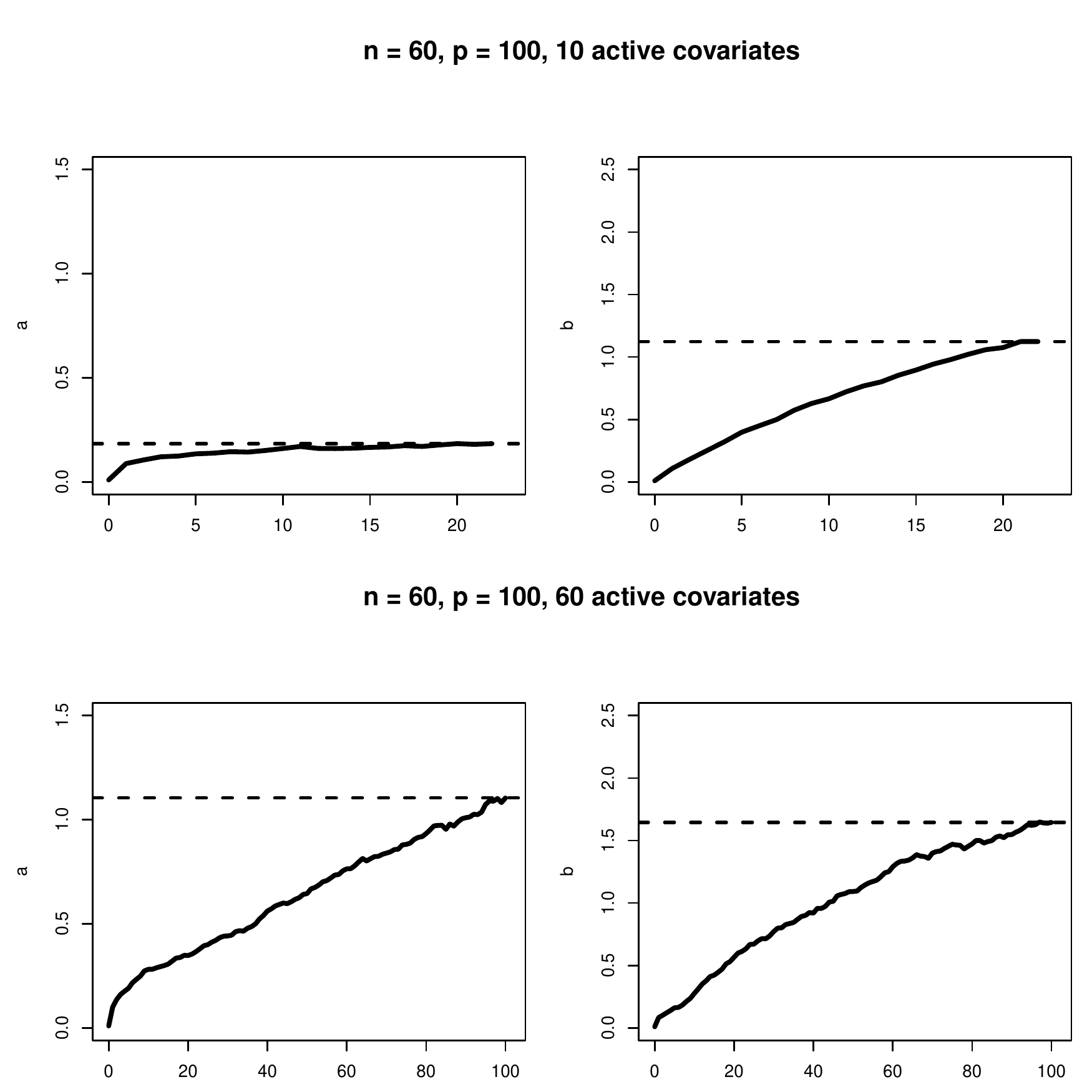} 
	\caption{Paths of the Monte Carlo/EM algorithm for estimating the MML estimators of $a$ and $b$ for one sparse case and one dense case.  The dashed line indicates the final MML estimate at convergence.  }
	\label{fig:2}
\end{figure}

\begin{figure}[h!]
	\centering
	\includegraphics[scale=0.65]{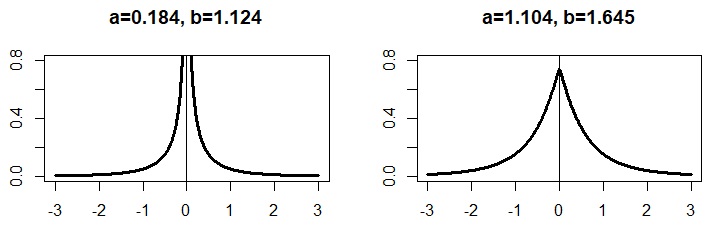}
	\caption{The marginal densities of the NBP prior, $\pi( \beta | a, b, \sigma^2)$, with different MML estimates of $(a, b)$. }
	\label{fig:3}
\end{figure}

A referee has pointed out that placing a mixture prior of beta prime densities as the prior for $\omega_i$'s in (\ref{scalemixture}) could also accommodate dense situations. While we recognize this fact, we believe that it is better to use marginal maximum likelihood (MML). First, putting a mixture of beta primes as the prior on $\omega_i^2, i=1,\ldots, p,$ would make the posteriors for $\beta_i, i = 1, \ldots, p$, multimodal.  The quality of our posterior approximation algorithms in Section \ref{PosteriorComputation} is dependent on the assumption that the approximate posterior is unimodal  (especially if we use a variational density to approximate $\pi(\bm{\beta} | \bm{y})$). Second, if we used a mixture prior, we would then need to tune both the mixture weight(s) and the hyperparameters in each mixture component. As we demonstrate in Sections \ref{selfadaptivity} and \ref{Simulations}, utilizing a single beta prime prior as the scale with MML estimates for hyperparameters performs quite well.

\section{Computation for the NBP Model} \label{PosteriorComputation}

\subsection{Posterior Simulation} \label{PosteriorSimulation}
Using the reparametrization (\ref{NBPhierreparem}), we see that the NBP model admits the following full conditional densities. Let $\bm{D} = \textrm{diag} ( \lambda_1^2 \xi_1^2, \ldots, \lambda_p^2 \xi_p^2 )$. The full conditional densities under the NBP model are:
\begin{equation} \label{NBPconditionals}
\begin{array}{rl}
\bm{\beta} \rvert \textrm{ rest} & \sim \mathcal{N}_p \left( \left( \bm{X}^\top \bm{X} + \bm{D}^{-1} \right)^{-1} \bm{X}^\top \bm{y}, \sigma^2 \left( \bm{X}^\top \bm{X} + \bm{D}^{-1} \right)^{-1} \right), \\
\lambda_i^2 \rvert \textrm{ rest} & \overset{ind}{\sim} \mathcal{GIG} \left( \frac{\beta_i^2}{\sigma^2 \xi_i^2}, 2, a - \frac{1}{2} \right),\hspace{.5cm}  i = 1, \ldots, p, \\
\xi_i^2 \rvert \textrm{ rest} & \overset{ind}{\sim} \mathcal{IG} \left( b + \frac{1}{2}, \frac{\beta_i^2}{2 \sigma^2 \lambda_i^2} + 1  \right), \hspace{.5cm} i = 1, \ldots, p, \\
\sigma^2 \rvert \textrm{ rest} & \sim \mathcal{IG} \left( \frac{n+p+2c}{2}, \frac{ || \bm{y} - \bm{X} \bm{\beta} ||_2^2 + \bm{\beta}^\top \bm{D}^{-1} \bm{\beta} + 2d}{2} \right), 
\end{array}
\end{equation}
where $\mathcal{GIG}(a, b, p)$ denotes a generalized inverse Gaussian density with the pdf, $f(x; u, v, p) \propto x^{(p-1)} e^{-(u/x+vx)/2}$. From (\ref{NBPconditionals}), implementation through Gibbs sampling is straightforward. Moreover, to save on computational time, the $\lambda_i$'s and $\xi_i$'s, $i = 1, \ldots, p$, are block-updated in parallel, and we can utilize the fast sampling algorithm of \citet{BhattacharyaChakrabortyMallick2016} to sample from the full conditional for $\bm{\beta}$ in $O(n^2 p)$ time.

To incorporate the EM algorithm for estimating $(a, b)$ from Section \ref{mmle} into our Gibbs sampler, we update $(a, b)$ every $M = 100$ iterations of the Gibbs sampler by solving (\ref{Mstep}) and estimating the summand terms in (\ref{Mstep}) from the mean of the past $M$ iterations of the Gibbs sample. Complete technical details for our Monte Carlo EM algorithm are given in Appendix \ref{AppB} of the Supplementary Materials.

The conditionals (\ref{NBPconditionals}) also admit a mean field variational Bayes (MFVB) implementation for the NBP model. Let $\bm{\lambda} = ( \lambda_1^2, \ldots, \lambda_p^2 )$ and $\bm{\xi} = (\xi_1^2, \ldots, \xi_p^2)$. Under MFVB, we use the following approximation of the posterior:
\begin{equation} \label{MFVB}
q ( \bm{\beta}, \bm{\lambda}, \bm{\xi}, \sigma^2 \lvert \bm{y} ) \approx q_1^* (\bm{\beta}) q_2^* (\bm{\lambda} ), q_3^* ( \bm{\xi} ) q_4^* (\sigma^2),
\end{equation}
where 
\begin{equation} \label{MFVBeqns}
\begin{array}{l}
q_1^* ( \bm{\beta}) \sim \mathcal{N}_p \left( \bm{\beta}^*, \bm{\Sigma}^* \right), \\
q_2^* (\bm{\lambda}) \sim \displaystyle \prod_{i=1}^p \mathcal{GIG} \left( k_i^*, l^*, m^* \right), \\ 
q_3^* (\bm{\xi}) \sim \displaystyle \prod_{i=1}^p \mathcal{IG} \left( u^*, v_i^* \right), \\
q_4^* (\sigma^2) \sim \mathcal{IG} (c^*, d^*).
\end{array}
\end{equation}
From (\ref{MFVBeqns}), we can implement an efficient MFVB coordinate ascent algorithm. We optimize  the parameters,
\begin{equation*}
\left( \bm{\beta}^*, \bm{\Sigma}^*, k_1^*, \ldots, k_p^*, l^*, m^*, u, v_1^*, \ldots, v_p^*, c^*, d^* \right)
\end{equation*}
to minimize the Kullback-Leibler (KL) distance between $\pi(\bm{\beta}, \bm{\lambda}, \bm{\xi}, \sigma^2 | \bm{y})$ and $q(\bm{\beta}, \bm{\lambda}, \bm{\xi}, \sigma^2)$. Posterior inference for $\bm{\beta}$ can then carried out through the variational density $q_1^*(\bm{\beta})$. To incorporate the EM algorithm into our MFVB algorithm, we update the hyperparameters $(a, b)$ in every $k$th iteration of the coordinate ascent algorithm by solving (\ref{Mstep}) and using  $\mathbb{E}_{q_2^{* (k-1)}, a^{(k-1)}} \left[ \ln (\lambda_i^2 ) \right]$ and $\mathbb{E}_{q_3^{* (k-1)}, b^{(k-1)}} \left[ \ln (\xi_i^2) \right]$ as estimates of the summands in (\ref{Mstep}). Complete technical details for our variational EM algorithm are given in Appendix \ref{AppB} of the Supplementary Materials.

The Monte Carlo EM and variational EM algorithms are both implemented in the \textsf{R} package, \texttt{NormalBetaPrime}. In our experience, the Monte Carlo EM algorithm tends to be slower than the variational EM algorithm, but Monte Carlo EM is more accurate. This is not surprising, since MCMC converges to the exact target posterior distribution, whereas MFVB approximates the posterior with a variational density that minimizes the KL divergence. Further, the Monte Carlo EM algorithm is relatively immune to the initialization of parameters, whereas the variational EM algorithm is very sensitive to this. This is not a model-specific problem, but an inherent shortcoming of MFVB; since MFVB is optimizing a highly non-convex objective function over $O(p^2)$ parameters, it can become ``trapped'' at a suboptimal local solution. We leave the issues of deriving more efficient sampling algorithms and more accurate variational algorithms for the NBP model as problems for future research.

\subsection{Variable Selection} \label{NBPVarSelect}
Since the NBP model is absolutely continuous, it assigns zero mass to exactly sparse vectors. Therefore, selection must be performed using some posthoc method. We propose using the ``decoupled shrinkage and selection'' (DSS) method proposed by \citet{HahnCarvalho2015}. Letting $\widehat{\bm{\beta}}$ denote the posterior mean of $\bm{\beta}$,  the DSS method performs variable selection by finding the ``nearest'' exactly sparse vector to $\widehat{\bm{\beta}}$. DSS solves the optimization,
\begin{equation} \label{dss}
\widehat{\bm{\gamma}} = \displaystyle \argmin_{\bm{\gamma}} n^{-1} || \bm{X} \widehat{\bm{\beta}} - \bm{X} \bm{\gamma} ||  + \lambda || \bm{\gamma} ||_0,
\end{equation}
and chooses the nonzero entries in $\widehat{\bm{\gamma}}$ as the active set. Since (\ref{dss}) is an NP-hard combinatorial problem, \citet{HahnCarvalho2015} propose using local linear approximation, i.e. solving the following surrogate optimization problem instead:
\begin{equation} \label{dss2}
\widehat{\bm{\gamma}} = \displaystyle \argmin_{ \bm{\gamma}} n^{-1} || \bm{X} \widehat{\bm{\beta}} - \bm{X} \bm{\gamma} ||  + \lambda \displaystyle \sum_{i=1}^p  \frac{| \gamma_i |}{ | \widehat{\beta}_i |} ,
\end{equation}
where $\widehat{\beta}_i$'s are the components in the posterior mean $\widehat{\bm{\beta}}$, and $\lambda$ is chosen through 10-fold cross-validation to minimize the mean squared error (MSE). Solving this optimization is not computationally expensive, because (\ref{dss2}) is essentially an adaptive LASSO regression \cite{Zou2006} with weights $1/|\widehat{\beta_i}|, i = 1, \ldots, p$, and there exist very efficient gradient descent algorithms to find LASSO solutions, e.g. \cite{FriedmanHastieTibshirani2010}. We use the \textsf{R} package \texttt{glmnet}, developed by \citet{FriedmanHastieTibshirani2010}, to solve (\ref{dss2}). We select the nonzero entries in $\widehat{\bm{\gamma}}$ from (\ref{dss2}) as the active set of covariates. The DSS method is available for the NBP prior in the \textsf{R} package, \texttt{NormalBetaPrime}.

\section{Simulation Studies} \label{Simulations}
For our simulation studies, we implement the self-adaptive NBP model (\ref{NBPhier}) for model (\ref{y=Xbeta+epsilon}) using the Monte Carlo EM algorithm. We set $c=d=10^{-5}$ in the $\mathcal{IG}(c,d)$ prior on $\sigma^2$, while estimating $a$ and $b$ in the beta prime prior $\beta'(a, b)$ from the EM algorithm described in Section \ref{mmle}. We run the Gibbs samplers for 15,000 iterations, discarding the first 10,000 as burn-in. We use the posterior median estimator $\widehat{\bm{\beta}}$ as our point estimator and deploy the DSS strategy described in Section \ref{NBPVarSelect} for variable selection.

\subsection{Adaptivity to Different Sparsity Levels} \label{adaptivityexperiments}
In the first simulation study, we show that the self-adaptive NBP model has excellent performance under different sparsity levels. Under model (\ref{y=Xbeta+epsilon}), we generate a design matrix $\bm{X}$ where the $n$ rows are independently drawn from $\mathcal{N}_p (\bm{0}, \bm{\Gamma})$, $\mathbf{\Gamma} = (\Gamma_{ij})_{p \times p}$ with $\Gamma_{ij} = 0.5^{|i-j|}$, and then centered and scaled. We fix $\sigma^2 = 2$ and set $n = 60, p = 100$, with varying levels of sparsity: 
\begin{itemize}
\item
Experiment 1: 10 active predictors (sparse model)
\vspace{-.3cm}
\item
Experiment 2: 20 active predictors (fairly sparse model)
\vspace{-.3cm}
\item
Experiment 3: 40 active predictors (fairly dense model)
\vspace{-.3cm}
\item
Experiment 4: 60 active predictors (dense model)
\end{itemize} 
In all these settings, the true nonzero predictors in $\bm{\beta}_0$ under (\ref{y=Xbeta+epsilon}) are generated from $\mathcal{U}\left( [-2, -0.5] \cup [0.5, 2] \right)$. 

We compare the performance of the self-adaptive NBP prior with that of several other popular Bayesian and frequentist methods. For the competing Bayesian methods, we use the horseshoe \cite{CarvalhoPolsonScott2010} and the spike-and-slab LASSO  (SSL) \cite{RockovaGeorge2018}. For the horseshoe, we consider two ways of tuning the global shrinkage parameter $\tau$, which controls the sparsity of the model: 1) endowing $\tau$ with a standard half-Cauchy prior $\mathcal{C}^+ (0, 1)$,  and 2) estimating $\tau$ from restricted marginal maximum likelihood on the interval $[1/n, 1]$, as advocated by \cite{VanDerPasSzaboVanDerVaart2017}. These methods are denoted as HS-HC and HS-REML respectively. For the SSL model, the beta prior on the mixture weight $\theta$ controls the sparsity of the model. We consider two scenarios: 1) endowing $\theta$ with a $\mathcal{B}(1, p)$ prior, which induces strong sparsity, and 2) endowing $\theta$ with a $\mathcal{B}(1, 1)$ prior, which does not strongly favor sparsity. Finally, we compare the self-adaptive NBP prior to the minimax concave penalty (MCP) method \cite{Zhang2010}, the smoothly clipped absolute deviation (SCAD) method \cite{FanLi2001}, and the elastic net (ENet) \cite{ZouHastie2005}. The tuning parameters for MCP, SCAD, and EN are chosen through cross-validation. These methods are available in the \textsf{R} packages: \texttt{horseshoe}\footnote{For the HS-REML method, we slightly modified the code in the \texttt{horseshoe} function in the \texttt{horseshoe} \textsf{R} package.}, \texttt{SSLASSO}, \texttt{picasso}, and \texttt{glmnet}.

For each of our methods, we compute the mean squared error (MSE), false discovery rate (FDR), false negative rate (FNR), and overall misclassification probability (MP) averaged across 100 replications:
\begin{align*}
\begin{array}{c}
\textrm{MSE} = || \widehat{\bm{\beta}} - \bm{\beta}_0 ||_2^2/p, \hspace{.3cm} \textrm{FDR} = \textrm{FP / (TP + FP)}, \\
\textrm{FNR} = \textrm{FN / (TN + FN)}, \hspace{.3cm} \textrm{MP} = (\textrm{FP + FN}) / p, 
\end{array}
\end{align*}
where FP, TP, FN, and TN denote the number of false positives, true positives, false negatives, and true negatives respectively. 

Table \ref{Table:1} shows our results averaged across 100 replications for the NBP, HS-HC, HS-REML, SSL-$\mathcal{B}(1,p)$, SSL-$\mathcal{B}(1,1)$, MCP, SCAD, and ENet methods. Across all of the various sparsity settings, the NBP has the lowest mean squared error, indicating that it performs consistently well for estimation. In Experiments 2, 3, and 4, the NBP model also achieves either the lowest or the second lowest misclassification probability, demonstrating that it is also robust for variable selection.

The HS, SSL, MCP, and SCAD methods all perform worse as the model becomes more dense. The truncation of $\tau \in [1/n, 1]$ in the HS-REML model lowers the FDR for the horseshoe, but this also tends to overshrink large signals, leading to greater estimation error than the HS-HC model. In dense settings, endowing the sparsity parameter $\theta$ with a $\mathcal{B}(1,1)$ prior rather than a $\mathcal{B}(1,p)$ prior improves the performance under the SSL model, but not enough to be competitive with the NBP. Meanwhile, the ENet performs the worst in the sparse setting, but its performance drastically improves as the true model becomes more dense.  However, the NBP still outperforms the ENet in terms of estimation.

\subsection{More Numerical Experiments with Large $p$} \label{moreexperiments}
In this section, we consider two more settings with large $p$. In these experiments, the design matrix $\bm{X}$ is generated the same way that it was in Section \ref{adaptivityexperiments}. The active predictors are randomly selected and fixed at a certain level, and the remaining covariates are set equal to zero. 
\begin{itemize}
\item
Experiment 5: ultra-sparse model with a few large signals ($n = 100, p = 500$, 8 active predictors set equal to 5)
\item
 Experiment 6: dense model with many small signals ($n = 200, p = 400$, 200 active predictors set equal to 0.6) 
\end{itemize}

We implement Experiments 5 and 6 for the self-adaptive NBP, HS-HC, HS-REML, SSL-$\mathcal{B}(1,p)$, SSL-$\mathcal{B}(1,1)$, MCP, SCAD, and ENet models. Table \ref{Table:2} shows our results averaged across 100 replications. In Experiment 5, the NBP, HS, and SSL models all significantly outperform their frequentist competitors. The HS and SSL models do perform the best in this setting, but the NBP's performance is quite comparable to them. In particular, the NBP (as well as the HS) gives 0 for FDR, FNR, and MP, which illustrates that the NBP model is well-suited for variable selection in ultra-sparse situations. In Experiment 6, the NBP model gives the lowest MSE and lowest MP of all the methods. The ENet also performs well in this setting, but it is outperformed by the NBP in terms of estimation and variable selection. Our results for Experiment 6 confirm that the self-adaptive NBP model can effectively adapt to non-sparse situations.

Based on our numerical studies, it seems as though the horseshoe, spike-and-slab lasso, MCP, and SCAD are well-suited for sparse estimation, but cannot accommodate non-sparse situations as well. On the other hand, the elastic net seems to be a suboptimal estimator in sparse situations (e.g., in Experiment 5, its misclassification rate was 0.104, much higher than the other methods), but it greatly improves in dense settings. 

In contrast, the self-adaptive NBP prior is the most robust estimator across \textit{all} the different sparsity patterns. If the true model is very sparse, the sparsity parameter $a$ will be estimated to be very small and hence place heavier mass around zero. But if the true model is dense, the sparsity parameter $a$ will be large, so the singularity at zero disappears and the prior becomes more diffuse. As a result, small signals are more easily detected by the self-adaptive NBP prior.

\begin{table}[t!]

  \caption{Simulation results for NBP, HS-HC, HS-REML, SSL-$\mathcal{B}(1,p)$, SSL-$\mathcal{B}(1,1)$, MCP, SCAD, and ENet models, averaged across 100 replications when $n=60, p = 100$.  }
  \centering

  \medskip

\resizebox{.75\textwidth}{!}{
  \begin{tabularx}{\linewidth}{*{5}{p{.18\linewidth}}}
    \multicolumn{5}{l}{Experiment 1: sparse model (10 active predictors)} \\ \toprule
    \textbf{Method} & \textbf{MSE} & \textbf{FDR} & \textbf{FNR} & \textbf{MP} \\ \midrule
    NBP & \textbf{0.019} & 0.214 &  0.011 & 0.039 \\
    HS-HC & 0.020 & 0.128 & 0.014 & 0.029 \\
    HS-REML & 0.021  & \textbf{0.023} & 0.023 & \textbf{0.023} \\
   SSL-$\mathcal{B}(1,p)$ & 0.020 & 0.066  & 0.019 & 0.026 \\ 
    SSL-$\mathcal{B}(1,1)$ & 0.025  & 0.151 & 0.017 & 0.036 \\
    MCP & 0.020 & 0.238 & \textbf{0.014} & 0.046 \\
    SCAD & 0.028 & 0 &  0.1 & 0.1 \\
   ENet & 0.037 & 0.730 & 0.006 & 0.284 \\ \bottomrule
  \end{tabularx}}

  \medskip
\resizebox{.75\textwidth}{!}{
  \begin{tabularx}{\linewidth}{*{5}{p{.18\linewidth}}}
    \multicolumn{5}{l}{Experiment 2: fairly sparse model (20 active predictors)} \\ \toprule
    \textbf{Method} & \textbf{MSE} & \textbf{FDR} & \textbf{FNR} & \textbf{MP} \\ \midrule
    NBP & \textbf{0.077} & 0.202 & 0.050 & 0.083 \\
    HS-HC &  0.110 & 0.235 & 0.084 & 0.11 \\
    HS-REML & 0.286  & \textbf{0.130} & 0.115 & 0.119 \\
   SSL-$\mathcal{B}(1,p)$ & 0.090 & 0.175  & 0.053 & \textbf{0.078} \\  
    SSL-$\mathcal{B}(1,1)$ & 0.090  & 0.222 & 0.048 & 0.086 \\
    MCP & 0.238 & 0.321 & 0.091 & 0.142 \\
    SCAD & 0.226 & 0.791 &  0.199 & 0.252 \\
   ENet & 0.096 & 0.610 & \textbf{0.031} & 0.310 \\ \bottomrule
  \end{tabularx}}

  \medskip
\resizebox{.75\textwidth}{!}{
  \begin{tabularx}{\linewidth}{*{5}{p{.18\linewidth}}}
    \multicolumn{5}{l}{Experiment 3: fairly dense model (40 active predictors)} \\ \toprule
    \textbf{Method} & \textbf{MSE} & \textbf{FDR} & \textbf{FNR} & \textbf{MP} \\ \midrule
    NBP & \textbf{0.448} & 0.251 & \textbf{0.240} & \textbf{0.246} \\
    HS-HC &  0.535 & 0.243 & 0.256 & 0.254 \\
    HS-REML & 1.10  & \textbf{0.233} & 0.338 & 0.325 \\
   SSL-$\mathcal{B}(1,p)$ & 0.728 & 0.300  & 0.270 & 0.279 \\  
    SSL-$\mathcal{B}(1,1)$ & 0.665  & 0.308 & 0.260 & 0.276 \\
    MCP & 1.31 & 0.298 & 0.343 & 0.344 \\
    SCAD & 1.21 & 0.604 &  0.401 & 0.440 \\
   ENet & 0.453 & 0.423 & 0.198 & 0.320 \\ \bottomrule
  \end{tabularx}}

  \medskip
\resizebox{.75\textwidth}{!}{
  \begin{tabularx}{\linewidth}{*{5}{p{.18\linewidth}}}
    \multicolumn{5}{l}{Experiment 4: dense model (60 active predictors)} \\ \toprule
    \textbf{Method} & \textbf{MSE} & \textbf{FDR} & \textbf{FNR} & \textbf{MP} \\ \midrule
    NBP & \textbf{0.760} & 0.173 & 0.467 & 0.344 \\
    HS-HC &  1.10 & 0.184 & 0.495 & 0.395 \\
    HS-REML & 1.76 & \textbf{0.149} & 0.552 & 0.489 \\
   SSL-$\mathcal{B}(1,p)$ & 1.53 & 0.223  & 0.506 & 0.409 \\ 
    SSL-$\mathcal{B}(1,1)$ & 1.40 & 0.226 & 0.495 & 0.395 \\
    MCP & 1.31 & 0.298 & \textbf{0.343} & 0.359 \\
    SCAD & 2.18 & 0.430 &  0.603 & 0.589 \\
   ENet & 0.892 & 0.260 & 0.426 & \textbf{0.336} \\ \bottomrule
  \end{tabularx}}

  \label{Table:1}
\end{table}

\begin{table}[t!]

  \caption{More simulation results for NBP, HS-HC, HS-REML, SSL-$\mathcal{B}(1,p)$, SSL-$\mathcal{B}(1,1)$, MCP, SCAD, and ENet models, averaged across 100 replications.  }

  \centering

  \medskip

\resizebox{.8\textwidth}{!}{
  \begin{tabularx}{\linewidth}{*{5}{p{.18\linewidth}}}
    \multicolumn{5}{l}{Experiment 5: $n = 100$, $p = 500$, 8 active predictors set equal to 5. } \\ \toprule
    \textbf{Method} & \textbf{MSE} & \textbf{FDR} & \textbf{FNR} & \textbf{MP} \\ \midrule
    NBP & 0.0007 & \textbf{0} &  \textbf{0} & \textbf{0} \\
    HS-HC & \textbf{0.0005} & \textbf{0} & \textbf{0} & \textbf{0} \\
    HS-REML & \textbf{0.0005}  & \textbf{0} & \textbf{0} & \textbf{0} \\
   SSL-$\mathcal{B}(1,p)$ & \textbf{0.0005} & 0.037  & \textbf{0} & 0.0007 \\ 
    SSL-$\mathcal{B}(1,1)$ & 0.0008  & 0.089 & \textbf{0} & 0.0017 \\
    MCP & 0.078 & 0.124 & 0.0012 & 0.011 \\
    SCAD & 0.081 & 0.984 &  0.016 & 0.031 \\
   ENet & 0.067 & 0.859 & \textbf{0} & 0.104 \\ \bottomrule
  \end{tabularx}}

  \medskip
\resizebox{.8\textwidth}{!}{
  \begin{tabularx}{\linewidth}{*{5}{p{.18\linewidth}}}
    \multicolumn{5}{l}{Experiment 6: $n =200$, $p = 400$, 200 active predictors set equal to 0.6} \\ \toprule
    \textbf{Method} & \textbf{MSE} & \textbf{FDR} & \textbf{FNR} & \textbf{MP} \\ \midrule
    NBP & \textbf{0.031} & 0.273 & 0.400 & \textbf{0.351} \\
    HS-HC &  0.041 & 0.261 & 0.423 & 0.384 \\
    HS-REML & 0.049  & \textbf{0.204} & 0.469 & 0.444 \\
   SSL-$\mathcal{B}(1,p)$ & 0.095 & 0.311  & 0.462 & 0.437 \\  
    SSL-$\mathcal{B}(1,1)$ & 0.093  & 0.334 & 0.458 & 0.433 \\
    MCP & 0.058 & 0.213 & 0.479 & 0.462 \\
    SCAD & 0.051 & 0.488 &  0.499 & 0.498 \\
   ENet & 0.038 & 0.346 & 0.362 & 0.355 \\ \bottomrule
  \end{tabularx}}

  \label{Table:2}
\end{table}

\section{Analysis of a Gene Expression Data Set} \label{DataAnalysis}
We analyze a real data set from a study on Bardet-Biedl syndrome (BBS), an autosomal recessive disorder which leads to progressive vision loss and which is caused by a mutation in the TRIM32 gene. This data set was first studied by \citet{Scheetz2006} and is available in the \textsf{R} package \texttt{flare}. This data set contains $n = 120$ samples with TRIM32 as the response variable and the expression levels of $p = 200$ other genes as the covariates. 
	
To determine TRIM32's association with these other genes, we implement the self-adaptive NBP, HS-HC, HS-REML, SSL-$\mathcal{B}(1,p)$, SSL-$\mathcal{B}(1,1)$, MCP, SCAD, and ENet models on this data set after centering and scaling $\bm{X}$ and $\bm{y}$. To assess these methods' predictive performance, we perform five-fold cross validation, using 80 percent of the data as our training set to obtain an estimate of $\bm{\beta}$, $\widehat{\bm{\beta}}_{\textrm{train}}$. We then use $\widehat{\bm{\beta}}_{\textrm{train}}$ to compute the mean squared error of the residuals on the remaining 20 percent of the left-out data. We repeat this five times, using different training and test sets each time, and take the average MSE as our mean squared prediction error (MSPE). 

Table \ref{Table:3} shows the results of our analysis. The NBP and ENet models give the best predictive performance of all the methods, with 31 genes and 26 genes selected as significantly associated with TRIM32, respectively. The ENet has slightly lower MSPE, but the NBP model's performance is very similar to the ENet's. The HS, SSL, MCP, and SCAD methods result in the most parsimonious models, with 6 or fewer genes selected, but their average prediction errors are all higher than the NBP's or the ENet's.

\begin{table}[t!]
  \caption{Results for analysis of the Bardet-Biedl syndrome (BBS) dataset.}
  \centering
\resizebox{.8\textwidth}{!}{
\begin{tabular}{l*{3}{c}r}
\hline
\textbf{Method}  & \textbf{Number of Genes Selected} & \textbf{MSPE} \\
\hline
NBP & 31 & 0.466 \\
HS-HC & 6 & 0.797 \\
HS-REML & 4 & 0.616 \\
SSL-$\mathcal{B}(1,p)$ & 3 & 0.594 \\
SSL-$\mathcal{B}(1,1)$ & 3 & 0.504 \\
MCP & 5 & 0.582 \\
SCAD & 5 & 0.603 \\
ENet & 26 & \textbf{0.462} \\
\hline
\end{tabular}}
  \label{Table:3}
\end{table}

Figure \ref{fig:4} plots the posterior 95 percent credible intervals for the 31 genes that the NBP model selected as significant. Figure \ref{fig:4} shows that the self-adaptive NBP prior is able to detect small gene expression values that are very close to zero.  On this particular data set, the slightly more dense models had much better prediction performance than the most parsimonious models, suggesting that there may be a number of small signals in our data.

\section{Concluding Remarks and Future Work}

\begin{figure}[t!]
\centering
\includegraphics[scale=0.52, trim={0 0.75cm 0 1.5cm}]{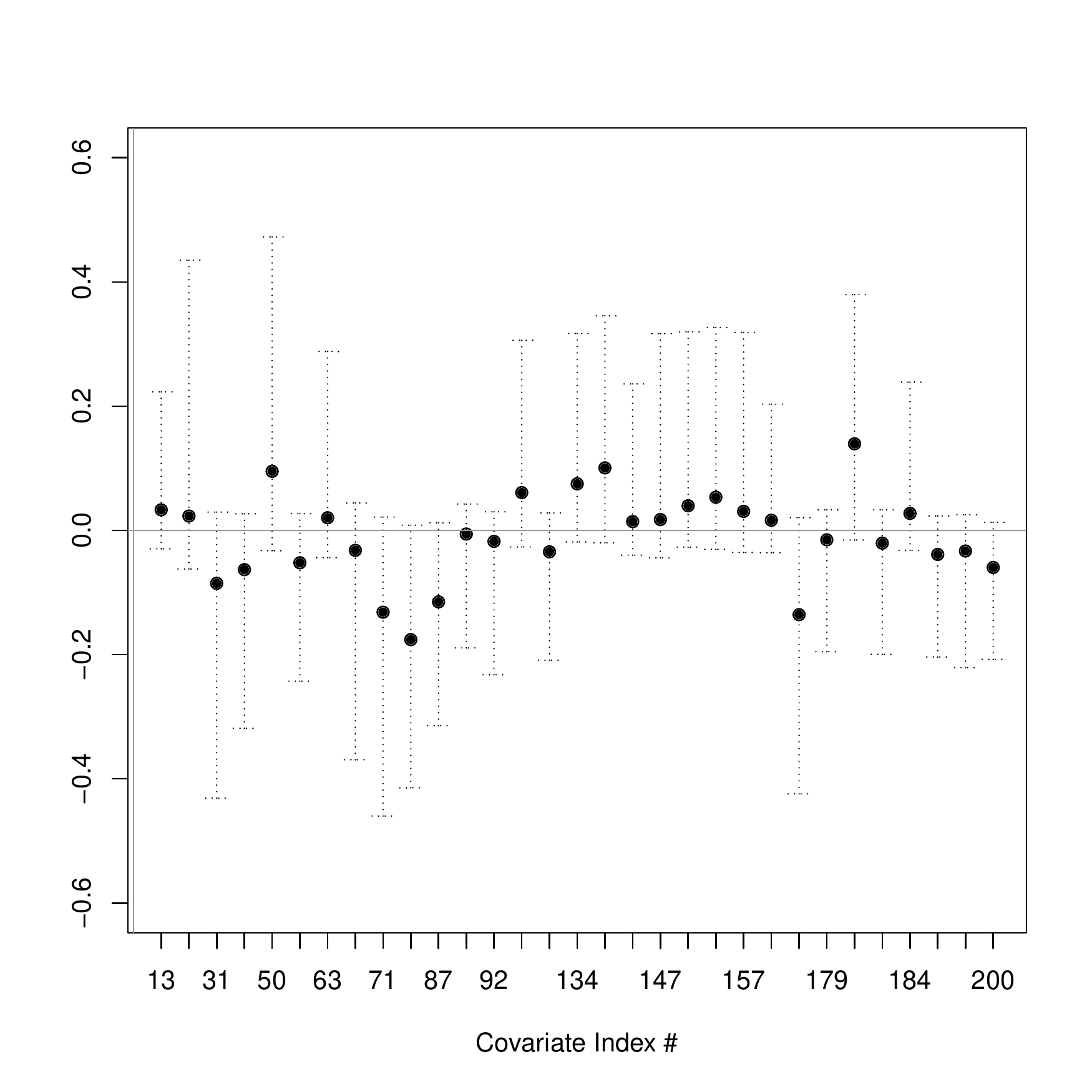} 
\caption{Posterior median and 95\% credible intervals for the 31 genes that were selected as significantly associated with TRIM32 by the NBP model.  }
\label{fig:4}
\end{figure}

In this paper, we have introduced the normal-beta prime (NBP) model for high-dimensional Bayesian linear regression. We proved that the NBP prior obtains the (near) minimax posterior contraction rate in the asymptotic regime where $p \gg n$ and the underlying model is sparse. To make our prior self-adaptive in finite samples, we introduced an empirical Bayes approach for estimating the NBP's hyperparameters through maximum marginal likelihood (MML). Our MML approach for estimating hyperparameters affords the NBP a great deal of flexibility and adaptivity to different levels of sparsity and different signal strengths, while avoiding degeneracy. 

Future work will be to extend the NBP prior to more complex and more flexible models, such as nonparametric regression or semiparametric regression with unknown error distribution. The NBP prior can also be employed for other statistical problems like density estimation or classification. We conjecture that due to its flexibility, the NBP prior would retain its strong empirical and theoretical properties in these other settings.

Additionally, we would like to provide further theoretical support for the marginal maximum likelihood approach described in Section \ref{mmle}. Although there are philosophical reasons for MML (namely, that it maximizes the ``model evidence''), we wish to investigate if the MML estimates of $(a, b)$ also lead to (near) minimax posterior contraction under the same conditions as those in Section \ref{minimaxregcontraction}. Currently, theoretical justifications for MML under model (\ref{y=Xbeta+epsilon}) have been confined to either the simple normal means model ($\bm{X} = \bm{I}$, $n=p$) or the scenario where $p \leq n$ and the MML estimate can be explicitly calculated in closed form (as is the case for the hyperparameter $g$ in $g$-priors). See, e.g., \cite{VanDerPasSzaboVanDerVaart2017, JohnstoneSilverman2004, GeorgeFoster2000, SparksKhareGhosh2015}. Recently, \citet{RousseauSzabo2017} extended the class of models for which the posterior contraction rate can be obtained under MML estimates of a hyperparameter in the prior, but unfortunately, their framework does not seem to be applicable to the high-dimensional linear regression model (\ref{y=Xbeta+epsilon}), which is complicated by the presence of a high-dimensional, rank-deficient design matrix $\bm{X}$. 

We note that several other papers, e.g. \cite{MartinWalker2014, MartinMessWalker2017}, have derived theoretical results under empirical Bayes methods, but the approaches in these papers are not based on marginal maximum likelihood. Instead, their methods are ``empirical Bayes'' in the sense that the prior is constructed from the data, while other hyperparameters are specified \textit{a priori} based on asymptotic arguments. We hope to address the theoretical aspects of the self-adaptive NBP model with MML-estimated hyperparameters in future work.

\bibliographystyle{apa}
\bibliography{Reference}

\begin{appendix}

\section{Proofs of Main Theorems} \label{appA}

Before proving Theorem \ref{Th:3.1}, we restate the main results on posterior consistency from \citet{SongLiang2017}. Proposition \ref{Prop:B.1} is a restatement of Theorems A.1 and A.2 in \cite{SongLiang2017}.

\begin{proposition}
\label{Prop:B.1}
Consider the linear regression model (\ref{truemodel}) and suppose that condition \ref{As:A1}-\ref{As:A5} hold. Suppose that the prior for $\pi(\bm{\beta}, \sigma^2)$ is of the form,
\begin{equation} \label{priorspec}
\pi(\bm{\beta} | \sigma^2) = \displaystyle \prod_{i=1}^p \left[ g(\beta_i / \sigma) / \sigma \right], \hspace{.3cm} \sigma^2 \sim \mathcal{IG}(c, d).
\end{equation}
Suppose $r_n = M \sqrt{s_n \log p_n / n}$, where $M>0$ is sufficiently large. If the density $g(\cdot)$ in (\ref{priorspec}) satisfies
\begin{equation} \label{contractionconditions}
\begin{array}{c}
1 - \displaystyle \int_{-k_n}^{k_n} g(x) dx \leq p_n^{-(1+u)}, \\
- \log \left( \displaystyle \inf_{x \in [-E_n, E_n]} g(x) \right) = O(\log p_n),
\end{array}
\end{equation}
where $u > 0$ is a constant and $k_n \asymp \sqrt{s_n \log p_n / n} / p_n$, then the following results hold:

\begin{equation*}
\begin{array}{l}
\Pr_{\bm{\beta}_0} \left( \Pi \left( \bm{\beta}: || \bm{\beta} - \bm{\beta}_0 ||_2 \geq c_1 \sigma_0 r_n | \bm{y}_n \right) \geq e^{-c_2 n r_n^2} \right) \leq e^{-c_3 n r_n^2 }, \\
\Pr_{\bm{\beta}_0} \left( \Pi \left( \bm{\beta}: || \bm{\beta} - \bm{\beta}_0 ||_1 \geq c_1 \sigma_0 \sqrt{s_n} r_n | \bm{y}_n \right) \geq e^{-c_2 n r_n^2} \right) \leq e^{-c_3 n r_n^2 },  \\
\Pr_{\bm{\beta}_0} \left( \Pi \left( \bm{\beta}: || \bm{X}_n \bm{\beta} - \bm{X}_n \bm{\beta}_0 ||_2 \geq c_0 \sigma _0 \sqrt{n} r_n | \bm{y}_n \right) < 1 -  e^{-c_2 n r_n^2} \right) \leq e^{-c_3 n r_n^2 },  \\
\Pr_{\bm{\beta}_0} \left( \Pi \left( \bm{\beta}: \textrm{at least } \widetilde{q}_n \textrm{ entries of } | \bm{\beta}/\sigma | \textrm{ are larger than } k_n | \bm{y}_n \right) > e^{-c_2 n r_n^2} \right) \leq e^{-c_3 n r_n^2},
\end{array}
\end{equation*}
for some constants $c_0, c_1, c_2, c_3 > 0$, and $\widetilde{q}_n \asymp s_n$.
\end{proposition}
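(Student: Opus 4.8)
The plan is to prove Proposition \ref{Prop:B.1} via the general posterior contraction machinery of Ghosal, Ghosh and van der Vaart, specialized to sparse high-dimensional regression. For any measurable set $A \subset \mathbb{R}^{p_n}$ I would write the posterior probability as the Bayes ratio
$$\Pi(A \mid \bm{y}_n) = \frac{\int_A (f_{\bm{\beta}}/f_{\bm{\beta}_0})(\bm{y}_n)\, d\Pi(\bm{\beta},\sigma^2)}{\int (f_{\bm{\beta}}/f_{\bm{\beta}_0})(\bm{y}_n)\, d\Pi(\bm{\beta},\sigma^2)},$$
and control the numerator and denominator separately on an event of $\mathbb{P}_0$-probability at least $1 - e^{-c_3 n r_n^2}$. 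The four displayed assertions correspond to four choices of the ``bad set'' $A$: the $\ell_2$ ball, the $\ell_1$ ball, the prediction ball, and the large-dimension set. The prediction-norm bound is the fundamental one, and the $\ell_2$ and $\ell_1$ statements follow from it through the restricted-eigenvalue assumption \ref{As:A3}.

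First I would establish the \textbf{prior mass (denominator) bound} using the second condition in (\ref{contractionconditions}). Restricting the integral to a neighborhood in which the $s_n$ active coordinates lie in small boxes of width $\eta_n$ around $\beta_{0i}$, the remaining coordinates lie in $[-k_n \sigma, k_n \sigma]$, and $\sigma^2$ lies near $\sigma_0^2$, the product structure of (\ref{priorspec}) factorizes the prior mass. Assumption \ref{As:A5} guarantees each $\beta_{0i}/\sigma_0 \in [-E_n, E_n]$, so the active factors are each at least $\eta_n \inf_{x\in[-E_n,E_n]} g(x)/\sigma$; taking logarithms and using $-\log \inf_{[-E_n,E_n]} g = O(\log p_n)$ yields a contribution of order $-s_n \log p_n \asymp -n r_n^2$. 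The inactive factors contribute $(\int_{-k_n}^{k_n} g)^{p_n - s_n} \geq (1 - p_n^{-(1+u)})^{p_n} \to 1$ by the first condition, and the inverse-gamma prior contributes a fixed constant. Hence the prior mass of this neighborhood is $\geq e^{-c\, n r_n^2}$, and the standard lower bound on the likelihood ratio integrated over a Kullback--Leibler ball shows the denominator exceeds $e^{-c'\, n r_n^2}$ off an exponentially small event.

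Next I would build the \textbf{sieve} and the \textbf{tests}. The first condition in (\ref{contractionconditions}) bounds the per-coordinate probability of exceeding $k_n$ by $p_n^{-(1+u)}$, so a binomial tail estimate over $\binom{p_n}{j}$ coordinate subsets gives $\Pi(|\bm{\gamma}_{k_n}(\bm{\beta})| \geq \widetilde{q}_n) \leq \sum_{j \geq \widetilde{q}_n} p_n^{-uj} \leq e^{-c\, n r_n^2}$ for $\widetilde{q}_n \asymp s_n$; this is precisely the fourth assertion and also shows that the prior mass outside the sieve $\mathcal{F}_n = \{|\bm{\gamma}_{k_n}(\bm{\beta})| \leq \widetilde{q}_n\}$ is negligible. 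On $\mathcal{F}_n$ the effective dimension is $\lesssim s_n$, so assumption \ref{As:A3} makes the prediction seminorm $\|\bm{X}_n(\bm{\beta}-\bm{\beta}_0)\|_2/\sqrt{n}$ comparable to $\|\bm{\beta}-\bm{\beta}_0\|_2$ on sparse differences; the metric entropy of $\mathcal{F}_n$ at scale $r_n$ is then $\lesssim s_n \log p_n \asymp n r_n^2$, and Le Cam--Birg\'e testing yields tests with both error probabilities bounded by $e^{-c\, n r_n^2}$ that separate $\bm{\beta}_0$ from $\{\|\bm{X}_n(\bm{\beta}-\bm{\beta}_0)\|_2 \geq c_0 \sigma_0 \sqrt{n}\, r_n\}$.

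Finally I would \textbf{combine} the three ingredients in the usual way: on the intersection of the denominator event and the test event, the numerator over $A \cap \mathcal{F}_n$ is at most the type-II error $e^{-c\,n r_n^2}$, the numerator over $A \cap \mathcal{F}_n^c$ is at most the sieve-escape mass $e^{-c\,n r_n^2}$, and dividing by the denominator $e^{-c'\,n r_n^2}$ gives $\Pi(A\mid \bm{y}_n) \leq e^{-c_2 n r_n^2}$ with $\mathbb{P}_0$-probability at least $1 - e^{-c_3 n r_n^2}$, after enlarging $M$ (hence $r_n$) to absorb the constants. The prediction-norm statement follows directly; the $\ell_2$ and $\ell_1$ statements follow by converting the prediction separation through \ref{As:A3}, with the extra $\sqrt{s_n}$ in the $\ell_1$ case coming from Cauchy--Schwarz on the $\lesssim s_n$ effective coordinates. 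The main obstacle is the testing/entropy step: because $\bm{X}_n$ is rank-deficient when $p_n \gg n$, tests in prediction norm control estimation error only on \emph{sparse} sets, so the whole argument hinges on confining the posterior to $\mathcal{F}_n$ first and then invoking the restricted-eigenvalue bound \ref{As:A3}. Ensuring that the sieve, the prior-mass neighborhood, and the tests are all calibrated to the \emph{same} rate $r_n$ with compatible constants is where the genuine bookkeeping lies.
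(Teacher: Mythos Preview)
The paper does not actually prove Proposition~\ref{Prop:B.1}: immediately before stating it, the authors write that it ``is a restatement of Theorems A.1 and A.2 in \cite{SongLiang2017},'' and it is invoked in the paper purely as a black box to derive Theorem~\ref{Th:3.1} by checking the two conditions in (\ref{contractionconditions}) for the NBP prior. So there is no in-paper proof to compare your attempt against.

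That said, your outline is a faithful reconstruction of the Song--Liang argument (which in turn specializes the Ghosal--Ghosh--van der Vaart machinery to sparse regression with scale-mixture priors): prior-mass lower bound from the product structure and the second condition in (\ref{contractionconditions}); sieve $\mathcal{F}_n=\{|\bm{\gamma}_{k_n}(\bm{\beta})|\le \widetilde{q}_n\}$ with escape mass controlled by the first condition; entropy/testing on the sieve using \ref{As:A3}; and the standard numerator/denominator splitting. Two places would need more care in a full write-up. First, your sentence ``this is precisely the fourth assertion'' conflates the \emph{prior} bound $\Pi(|\bm{\gamma}_{k_n}(\bm{\beta})|\ge\widetilde{q}_n)\le e^{-c\,nr_n^2}$ with the \emph{posterior} statement; the latter only follows after dividing by the denominator lower bound, exactly as you do in the combination step, so the fourth display is not established until the end. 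Second, the unknown $\sigma^2$ is more than a ``fixed constant'' contribution: both the prior-mass neighborhood and the tests have to be built jointly in $(\bm{\beta},\sigma^2)$, and one must show $\sigma^2$ concentrates near $\sigma_0^2$ so that the generalized dimension (defined through $|\beta_i/\sigma|$) and the restricted-eigenvalue comparison behave as intended. These are handled explicitly in \cite{SongLiang2017}; your sketch gestures at them but would need to be fleshed out.
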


Before proving Theorem \ref{Th:3.1}, we also prove the following two lemmas.
\begin{lemma} \label{Lemma1forTh3.1}
Suppose that $a_n \rightarrow 0$ as $n \rightarrow \infty$ and $b \in (1, \infty)$ as $n \rightarrow \infty$. Then
\begin{equation} \label{normalizingconstantrelation}
\frac{\Gamma(a_n+b)}{\Gamma(a_n) \Gamma(b) } \asymp a_n.
\end{equation}
\end{lemma}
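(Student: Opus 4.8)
The plan is to reduce the statement to the two elementary facts that $\Gamma$ is continuous and strictly positive on $(0,\infty)$ with $\Gamma(1)=1$, together with the functional equation $\Gamma(z+1)=z\,\Gamma(z)$. The role of the hypothesis $a_n\to 0$ is precisely to place us near the simple pole of $\Gamma$ at the origin, which is what produces the factor $a_n$; the only use made of $b\in(1,\infty)$ is that $b$ (and hence $a_n+b$) stays bounded away from $0$ and from $\infty$.

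First I would apply the recursion to the denominator, writing $\Gamma(a_n)=\Gamma(a_n+1)/a_n$, so that
\begin{equation*}
\frac{\Gamma(a_n+b)}{\Gamma(a_n)\,\Gamma(b)}
= a_n\cdot\frac{\Gamma(a_n+b)}{\Gamma(a_n+1)\,\Gamma(b)}
=: a_n\, h(a_n).
\end{equation*}
It then suffices to prove that $h(a_n)$ is bounded above and below by positive constants for all large $n$, and in fact I would show the sharper statement $h(a_n)\to 1$. Indeed, since $a_n\to 0$ and $b$ is a fixed constant in $(1,\infty)$, for $n$ large both $a_n+1$ and $a_n+b$ lie in a fixed compact subinterval of $(0,\infty)$ on which $\Gamma$ is continuous and strictly positive; hence $\Gamma(a_n+1)\to\Gamma(1)=1$ and $\Gamma(a_n+b)\to\Gamma(b)$, so $h(a_n)\to \Gamma(b)/(1\cdot\Gamma(b))=1$.

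Combining these, $\dfrac{1}{a_n}\cdot\dfrac{\Gamma(a_n+b)}{\Gamma(a_n)\,\Gamma(b)}=h(a_n)\to 1$, so in particular
\begin{equation*}
0<\liminf_{n\to\infty}\frac{1}{a_n}\cdot\frac{\Gamma(a_n+b)}{\Gamma(a_n)\,\Gamma(b)}
\leq \limsup_{n\to\infty}\frac{1}{a_n}\cdot\frac{\Gamma(a_n+b)}{\Gamma(a_n)\,\Gamma(b)}<\infty,
\end{equation*}
which is exactly the claim $\Gamma(a_n+b)/(\Gamma(a_n)\Gamma(b))\asymp a_n$. There is no genuine obstacle here; the only points requiring a word of care are (i) the $\asymp$ bookkeeping, i.e.\ checking that the normalized ratio is bounded away from both $0$ and $\infty$ rather than merely that it converges, and (ii) if one wishes to let $b=b_n$ vary with $n$, one must keep $b_n$ in a compact subinterval of $(1,\infty)$ so that the limits $\Gamma(a_n+1)\to 1$ and $\Gamma(a_n+b_n)/\Gamma(b_n)\to 1$ hold uniformly; under the convention of Theorem \ref{Th:3.1}, where $b$ is a fixed constant, this is automatic.
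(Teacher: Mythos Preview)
Your argument is correct. Both you and the paper start with the same key identity $\Gamma(a_n)=\Gamma(a_n+1)/a_n$, which extracts the factor $a_n$ and leaves the ratio $\Gamma(a_n+b)/\bigl(\Gamma(a_n+1)\Gamma(b)\bigr)$ to be controlled. Where the approaches diverge is in how this remaining ratio is shown to be bounded away from $0$ and $\infty$: you invoke continuity of $\Gamma$ on compact subsets of $(0,\infty)$ to conclude directly that it converges to $1$, whereas the paper rewrites its reciprocal as the Beta integral $B(a_n+1,b)=\int_0^1 u^{a_n}(1-u)^{b-1}\,du$ and then sandwiches this integral between $2^{-a_n-b}b^{-1}$ and $b^{-1}$ by elementary estimates. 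Your route is shorter and arguably more natural for the bare $\asymp$ claim; the paper's route has the minor advantage of producing explicit two-sided constants, namely
\[
\frac{a_n b}{a_n+b}\;\le\;\frac{\Gamma(a_n+b)}{\Gamma(a_n)\Gamma(b)}\;\le\;\frac{a_n\,2^{a_n+b}b}{a_n+b},
\]
which could be useful if one ever needed to track constants through the contraction-rate argument, but this is not actually exploited in the paper.
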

\begin{proof} [Proof of Lemma \ref{Lemma1forTh3.1}]
Rewrite (\ref{normalizingconstantrelation}) as
\begin{align*} \label{normalizingconstant1}
\frac{\Gamma(a_n+b)}{\Gamma(a_n) \Gamma(b) }  & = \frac{a_n \Gamma(a_n+b+1)}{(a_n+b) \Gamma(a_n+1) \Gamma(b)} \\
& = \frac{a_n}{a_n+b} \left( \frac{1}{ \int_{0}^{1} u^{a_n} (1-u)^{b-1} du} \right). \numbereqn
\end{align*}
We have the following inequalities:
\begin{equation} \label{normalizingconstant2}
\int_{0}^{1} u^{a_n} (1-u)^{b-1} du \leq \int_{0}^{1} (1-u)^{b-1} du =b^{-1},
\end{equation}
and
\begin{align*} \label{normalizingconstant3}
\int_{0}^{1} u^{a_n} (1-u)^{b-1} du & \geq \displaystyle \int_{1/2}^{1} u^{a_n} (1-u)^{b-1} du \\
& \geq 2^{-a_n} \int_{1/2}^{1} (1-u)^{b-1} du \\
& = 2^{-a_n} 2^{-b} b^{-1}. \numbereqn
\end{align*}
Thus, from (\ref{normalizingconstant1})-(\ref{normalizingconstant3}), we have
\begin{equation} \label{normalizingconstant4}
\frac{a_n b}{a_n+b}  \leq \frac{\Gamma(a_n+b)}{\Gamma(a_n) \Gamma(b)} \leq \frac{a_n  2^{a_n+b} b }{a_n+b}.
\end{equation}
Since $a_n \rightarrow 0$ as $n \rightarrow \infty$, we have $b/(a_n+b) \sim 1$ and $2^{a_n+b} b / (a_n+b) \sim 2^b$, and thus, from (\ref{normalizingconstant4}), we have $\Gamma(a_n + b) / \Gamma(a_n) \Gamma(b) \asymp a_n$ as $n \rightarrow \infty$.
\end{proof}

\begin{lemma} \label{Lemma2forTh3.1}
Let $b>1$. Then for any $a > 0$, $\beta '(a, b)$ is stochastically dominated by $\beta ' (a, 1)$.
\end{lemma}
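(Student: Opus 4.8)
The plan is to prove the stronger \emph{monotone likelihood ratio} (MLR) ordering between the two beta prime densities, from which first-order (usual) stochastic dominance follows by a standard crossing argument. Write $f_{a,b}$ for the beta prime density with parameters $(a,b)$ in (\ref{betaprime}), and write $x$ in place of $\omega_i^2$. First I would form the ratio $f_{a,1}(x)/f_{a,b}(x)$ for $x>0$. The key observation is that both densities carry the identical factor $x^{a-1}$, so it cancels, leaving
\begin{equation*}
\frac{f_{a,1}(x)}{f_{a,b}(x)} = \frac{\Gamma(a+1)\,\Gamma(b)}{\Gamma(a+b)}\,(1+x)^{b-1},
\end{equation*}
a positive constant times $(1+x)^{b-1}$. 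Since $b>1$, the map $x\mapsto (1+x)^{b-1}$ is strictly increasing on $(0,\infty)$, so the likelihood ratio is nondecreasing; that is, $\beta'(a,1)$ dominates $\beta'(a,b)$ in the MLR order.

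Next I would deduce ordinary stochastic dominance from the MLR property. Because $f_{a,1}/f_{a,b}$ is nondecreasing and both densities are proper, there is a threshold $x_0\in[0,\infty]$ with $f_{a,1}(x)\le f_{a,b}(x)$ for $x<x_0$ and $f_{a,1}(x)\ge f_{a,b}(x)$ for $x>x_0$. For any $t\ge 0$ I would then compare survival functions: if $t\ge x_0$, then $\int_t^{\infty}(f_{a,1}-f_{a,b})\,dx\ge 0$ directly, since the integrand is nonnegative on $(x_0,\infty)$; if $t<x_0$, I would instead write $\int_t^{\infty}(f_{a,1}-f_{a,b})\,dx = \int_0^{t}(f_{a,b}-f_{a,1})\,dx\ge 0$, where the equality uses that both densities integrate to one and the inequality uses $f_{a,b}\ge f_{a,1}$ on $(0,t)\subseteq(0,x_0)$. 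In either case $\Pr_{\beta'(a,1)}(W>t)\ge \Pr_{\beta'(a,b)}(W>t)$ for all $t\ge 0$, which is precisely the claimed stochastic dominance.

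There is no real obstacle here; the only point that requires a moment's care is noticing that the $x^{a-1}$ factor common to $\beta'(a,b)$ and $\beta'(a,1)$ cancels, reducing everything to the elementary monotonicity of $(1+x)^{b-1}$ when $b>1$ (and making clear why the hypothesis $b>1$ is exactly what fixes the direction of the ordering). As an alternative I could bypass densities entirely via the representation $\beta'(a,b)\overset{d}{=}U/(1-U)$ with $U\sim\mathrm{Beta}(a,b)$, combined with the known stochastic monotonicity of the Beta family in its second parameter and monotonicity of $u\mapsto u/(1-u)$, but the direct density computation above is shorter and self-contained. This lemma is then used in the proof of Theorem~\ref{Th:3.1} to control the tail of the beta prime scale prior, uniformly in $a$, by that of the heavier-tailed $\beta'(a,1)$.
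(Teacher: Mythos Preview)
Your proof is correct and follows essentially the same route as the paper: both form the density ratio $f_{a,1}(x)/f_{a,b}(x)\propto (1+x)^{b-1}$, observe it is increasing for $b>1$, and invoke the monotone likelihood ratio property to conclude stochastic dominance. The only difference is that you spell out the single-crossing argument from MLR to stochastic dominance explicitly, whereas the paper simply cites the implication.
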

\begin{proof}[Proof of Lemma \ref{Lemma2forTh3.1}]
Let $f(x | a, b)$ denote the probability density function (pdf) for the beta prime density, $\beta ' (a, b)$. We have
\begin{align*}
\frac{f(x | a, 1)}{f( x | a, b)} \propto \frac{ x^{a-1} (1+x)^{-a-1} }{ x^{a-1} (1+x)^{-a-b} } = (1+x)^{b-1},
\end{align*}
which is increasing in $x$ due to our assumption that $b > 1$. Hence, by the monotone likelihood ratio property, $\beta ' (a,b)$ is stochastically dominated by $\beta ' (a, 1)$ for any $b > 1$.
\end{proof}

\begin{proof}[Proof of Theorem \ref{Th:3.1}]
By Proposition \ref{Prop:B.1}, it is sufficient to verify that the NBP prior for each coefficient $\pi(\beta_i ), i = 1, \ldots, p_n,$ satisfies the two conditions (\ref{contractionconditions}). We first verify the first condition. Let $g( \cdot )$ be the marginal pdf of $\pi(\beta )$ for a single coefficient $\beta$. The pdf $g(x)$ under the NBP prior is
\begin{equation} \label{NBPmarginal}
g(x) = \frac{\Gamma(a_n+b)}{(2 \pi)^{1/2} \Gamma(a_n) \Gamma(b)} \displaystyle \int_{0}^{\infty} \exp \left( - \frac{x^2}{2 \omega^2} \right) (\omega^2)^{a_n-3/2} (1+ \omega^2)^{-a_n-b} d \omega^2. 
\end{equation}
By the symmetry of $g(x)$ and Fubini's Theorem, we have from (\ref{NBPmarginal}) that
\begin{align*} \label{firstcondition1}
& 1 - \displaystyle \int_{-k_n}^{k_n} g(x) dx = 2 \displaystyle \int_{k_n}^{\infty} g(x) dx \\
& \qquad = \frac{2 \Gamma(a_n+b)}{(2 \pi)^{1/2} \Gamma(a_n) \Gamma(b)} \displaystyle \int_{k_n}^{\infty} \displaystyle \int_{0}^{\infty} \exp \left( - \frac{x^2}{2 \omega^2} \right) (\omega^2)^{a_n-3/2} (1+ \omega^2)^{-a_n-b} d \omega^2 dx \\ 
& \qquad = \frac{\Gamma(a_n+b)}{\Gamma(a_n) \Gamma(b)} \displaystyle \int_{0}^{\infty} (\omega^2)^{a_n-1} (1+ \omega^2)^{-a_n-b} \left[ 2  \displaystyle \int_{k_n}^{\infty} (2 \pi \omega^2)^{-1/2} \exp \left( - \frac{x^2}{2 \omega^2} \right) dx \right] d \omega^2  \numbereqn
\end{align*}
Letting $X \sim \mathcal{N}(0, \omega^2)$, we see the inner integral in (\ref{firstcondition1}) is $\Pr (|X| \geq k_n)$. We use the concentration inequality, $\Pr ( |X| \geq k_n) \leq 2 e^{-k_n^2 / 2 \omega^2}$, to further bound (\ref{firstcondition1}) above as
\begin{align*} \label{firstcondition2}
2 \displaystyle \int_{k_n}^{\infty} g(x) dx & \leq \frac{2 \Gamma(a_n+b)}{\Gamma(a_n) \Gamma(b)} \displaystyle \int_{0}^{\infty} (\omega^2)^{a_n-1} (1+ \omega^2)^{-a_n-b} e^{-k_n^2 / 2 \omega^2} d \omega^2   \\
& \leq 2 a_n \displaystyle \int_{0}^{\infty} (\omega^2)^{a_n-1} (1+ \omega^2)^{-a_n-1} e^{-k_n^2 / 2 \omega^2} d \omega^2 \\
& = 2 a_n \displaystyle \int_{0}^{\infty} (1+u)^{-a_n-1} e^{-u(k_n^2/2)} du \\
& \leq 2 a_n \displaystyle \int_{0}^{\infty} e^{-u(k_n^2/2)} du \\
& = \frac{4 a_n}{k_n^2} \\
& \lesssim p_n^{-(1+u)}, \numbereqn
\end{align*}
where we used the assumption that $b \in (1, \infty)$ and Lemma \ref{Lemma2forTh3.1} in the second inequality, a transformation of variables $u = 1/ \omega^2$ in the first equality, and the assumption that  $a_n \lesssim k_n^2 p_n^{-(1+u)}$ for the final inequality of the above display. Thus, combining (\ref{firstcondition1})-(\ref{firstcondition2}) shows that the first condition in (\ref{contractionconditions}) holds.

We now show that the second condition of (\ref{contractionconditions}) also holds under our assumptions on $(a_n, b)$ and our assumption on the rate of growth on $E_n$ in \ref{As:A5}. With a change of variables, $z = x^2 / 2 \omega^2$, in (\ref{NBPmarginal}), we can rewrite the marginal pdf of the NBP prior, $g(x)$, as
\begin{equation} \label{NBPgx}
g(x) = \frac{\Gamma(a_n+b)}{2^{1-b} \pi^{1/2} \Gamma(a_n) \Gamma(b)} (x^2)^{a_n-1/2} \displaystyle \int_{0}^{\infty} e^{-z} z^{b - 1/2} (x^2+2z)^{-a_n-b} dz.
\end{equation}
By the symmetry of $g(x)$, the infimum of $g(x)$ on the interval $[-E_n, E_n]$ occurs at either $-E_n$ or $E_n$. From (\ref{normalizingconstantrelation}) in Lemma \ref{Lemma1forTh3.1}, (\ref{NBPgx}), and the assumptions that $E_n$ is nondecreasing and $b \in (1, \infty)$, we have
\begin{align*} \label{secondcondition1}
 \displaystyle \inf_{x \in [-E_n, E_n] } g(x) & \gtrsim a_n (E_n^2)^{a_n-1/2} \displaystyle \int_{0}^{\infty} e^{-z} z^{b-1/2} (E_n^2+2z)^{-a_n-b} dz \\
& = a_n (E_n^2)^{a_n - 1/2} \displaystyle \int_{0}^{\infty} e^{-z} \left( \frac{z}{E_n^2+2z} \right)^{b-1/2} \left( \frac{1}{E_n^2+2z} \right)^{a_n+1/2} dz \\
& \geq a_n (E_n^2)^{a_n - 1/2} \displaystyle \int_{1}^{2} e^{-z} \left( \frac{z}{E_n^2+2z} \right)^{b-1/2} \left( \frac{1}{E_n^2+2z} \right)^{a_n+1/2} dz \\
& \gtrsim a_n (E_n^2)^{a_n-1/2} (E_n^2+2)^{-b+1/2} (E_n^2 + 4)^{-a_n-1/2}   \\
& \asymp a_n (E_n^2)^{-b-1/2}. \numbereqn
\end{align*}
By assumption, $a_n \lesssim k_n^2 p_n^{-(1+u)}$ for some $u > 0$, and $\log(E_n) = O(\log p_n)$. Therefore, it follows from (\ref{secondcondition1}) that
\begin{align*} \label{secondcondition2}
- \log \left(  \displaystyle \inf_{x \in [-E_n, E_n]} g(x) \right) & \lesssim - \log(k_n^2 p_n^{-(1+u)}) + (b+1/2) \log p_n \\
& \lesssim - \log (p_n^{-(3+u)}) + (b+1/2) \log p_n \\
& \lesssim \log p_n, \numbereqn
\end{align*} 
where we used the fact that $k_n \asymp \sqrt{s_n \log p_n / n}/p_n$ and Assumption \ref{As:A4} that $s_n = o(n / \log p_n)$, and so $k_n \lesssim p_n^{-1}$. Thus, the second condition in (\ref{contractionconditions}) also holds. 

We have shown that as long as $a_n \lesssim k_n^2 p_n^{-(1+u)}$ for some $u > 0$, $b \in (1, \infty)$, and $\log(E_n) = O( \log p_n)$ in Assumption \ref{As:A5}, the two conditions (\ref{contractionconditions}) in Proposition \ref{Prop:B.1} are satisfied. Hence, Theorem \ref{Th:3.1} has been proven.
\end{proof}

We now prove that the M-step of the EM algorithm for obtaining the MML estimators of $(a, b)$ in Section \ref{mmle} always has a unique solution where $a^{(k)}>0, b^{(k)}>0$ at every $k$th iteration, and therefore, our EM algorithm avoids collapse to zero.

\begin{proof}[Proof of Theorem \ref{Th:4.1}]
At the $k$th iteration of the EM algorithm, the $(a, b)$ that solves (\ref{Mstep}) is
\begin{equation} \label{digammasystem}
\begin{array}{rclr}
 \psi(a) & = &  \frac{1}{p} \displaystyle \sum_{i=1}^p U_i ( \lambda_i^2) , &  a \geq 0, \\
 \psi(b)  & = & - \frac{1}{p} \displaystyle \sum_{i=1}^p , V_i ( \xi_i^2), & b \geq 0,
\end{array}
\end{equation}
where $U_i (\lambda_i^2)$ is an estimate of $\mathbb{E}_{a^{(k-1)}} \left[ \log(\lambda_i^2) | \bm{y} \right]$ and $V_i (\xi_i^2)$ is an estimate of $\mathbb{E}_{b^{(k-1)}} \left[ \log (\xi_i^2) | \bm{y} \right]$ taken from either the Gibbs sampler or the MFVB coordinate ascent algorithm. Since the $\lambda_i$'s and $\xi_i$'s, $i = 1, \ldots, p$, are strictly greater than zero and are drawn from $\mathcal{GIG}$ and $\mathcal{IG}$ densities in the Gibbs sampling algorithm or the MFVB algorithm (and thus, expectations of $\log(\lambda_i^2)$ and $\log(\xi_i^2)$, $i=1, \ldots, p$, are well-defined and finite), $U_i$ and $V_i$, $i = 1, \ldots, p$, exist and are finite. 

The digamma function $\psi(x)$ is continuous and monotonically increasing for all $x \in (0, \infty)$, with a range of $(-\infty, \infty)$ on the domain of positive reals. Therefore, for any $y \in \mathbb{R}$, there exists a unique $x \in (0, \infty)$ so that $\psi(x) = y$.  Since we impose the constraint that $ a \geq 0$, there must be a unique $\widehat{a}^{(k)} > 0$ that solves the first equation in (\ref{digammasystem}). Similarly, there exists a unique $\widehat{b}^{(k)} > 0$ that solves the second equation in (\ref{digammasystem}).  
\end{proof}
 
\section{Technical Details of the Monte Carlo EM and Variational EM Algorithms for the Self-Adaptive NBP Model} \label{AppB}

\subsection{Monte Carlo EM Algorithm} \label{MCMCforNBP}
After initializing $(\bm{\beta}, \lambda_1, \ldots, \lambda_p, \xi_1, \ldots, \xi_p, \sigma^2)$, we iteratively cycle through sampling from the full conditional densities in (\ref{NBPconditionals}). As described in Section \ref{PosteriorSimulation}), we incorporate the EM algorithm for obtaining the MML estimates of $(a, b)$ by solving for $(a, b)$ in (\ref{Mstep}) every $M = 100$ iterations of the Gibbs sampler. To assess convergence, we compute the square of the Euclidean distance between $(\widehat{a}^{(k-1)}, \widehat{b}^{(k-1)})$ and $( \widehat{a}^{(k)}, \widehat{b}^{(k)})$ at the $k$th iteration of the EM Monte Carlo algorithm, and if it falls below a small tolerance $\delta > 0$, then we set our MML estimates as $(\widehat{a}, \widehat{b}) = (\widehat{a}^{(k)}, \widehat{b}^{(k)} )$ and draw a final sample from the Gibbs sampler. 

We recommend setting $\delta = 10^{-6}$. If the square of the $\ell_2$ distance has not fallen below $\delta$ after a large number of iterations (we use a maximum number of 100 iterations for the EM algorithm, so that 10,000 total iterations of the Gibbs sampler have been sampled at this point), then we terminate the EM algorithm and use the final estimate from the 100th iteration as $(\widehat{a}, \widehat{b})$. In our experience, even if the square of the $\ell_2$ distance between $(\widehat{a}^{(k-1)}, \widehat{b}^{(k-1)})$ and $( \widehat{a}^{(k)}, \widehat{b}^{(k)})$  does not quite fall underneath the small $\delta > 0$ after $k=100$ updates, the successive iterates are still very close to one another at this point. Thus, all these later estimates of $(a, b)$ would have a similar effect on posterior inference.   Algorithm 1 at the end of Section \ref{AppB} gives the complete steps for implementing the EM/Gibbs algorithm for our model.

\subsection{Variational EM Algorithm} \label{VBforNBP}
Let $\bm{\lambda} = (\lambda_1^2, \ldots, \lambda_p^2)$ and $\bm{\xi} = (\xi_1^2, \ldots, \xi_p^2)$ from (\ref{NBPconditionals}). The mean field variational Bayes (MFVB) approach stems from the following lower bound:
\begin{equation} \label{VBlowerbound}
 \log \pi(\bm{y}) \geq \displaystyle \int_{(\bm{\beta}, \bm{\lambda}, \bm{\xi}, \sigma^2)} q(\bm{\beta}, \bm{\lambda}, \bm{\xi}, \sigma^2) \log \left( \frac{\pi(\bm{y}, \bm{\beta}, \bm{\lambda}, \bm{\xi}, \sigma^2, \bm{\gamma})}{q(\bm{\beta}, \bm{\lambda}, \bm{\xi}, \sigma^2)} \right) d(\bm{\beta}, \bm{\lambda}, \bm{\xi}, \sigma^2 ) \equiv \mathcal{L}[q(\cdot)],
\end{equation}
where $\mathcal{L}[q(\cdot)]$ is known as the evidence lower bound (ELBO). Here, we constrain $q(\bm{\beta}, \bm{\lambda}, \bm{\xi}, \sigma^2) = q_1^* (\bm{\beta}) q_2^* (\bm{\lambda}) q_3^* ( \bm{\xi}) q_4^* (\sigma^2)$  and the $q_i$'s, $i = 1, \ldots, 4$, to be families that ensure that (\ref{VBlowerbound}) is tractable. This approach is also known as mean field approximation (MFVB). The parameters in $q_1^*$,  $q_2^*$,  $q_3^*$, and $q_4^*$ are then found by maximizing (\ref{VBlowerbound}), which is equivalent to minimizing the Kullback-Leibler (KL) distance between $\pi(\bm{\beta}, \bm{\lambda}, \bm{\xi}, \sigma^2 | \bm{y})$ and $q(\bm{\beta}, \bm{\lambda}, \bm{\xi}, \sigma^2)$. Due to independence, $\pi(\bm{\beta} | \bm{y} )$ can be approximated by $q_1^*(\bm{\beta})$ and posterior inference can be carried out through $q_1^* (\bm{\beta})$. For a detailed review of variational inference, see \citet{BleiKucukelbirMcAuliffe2017}.

Based on the full conditional densities in (\ref{PosteriorSimulation}), we use the approximation,
\begin{equation*} 
q ( \bm{\beta}, \bm{\lambda}, \bm{\xi}, \sigma^2 \lvert \bm{y} ) \approx q_1^* (\bm{\beta}) q_2^* (\bm{\lambda} ), q_3^* ( \bm{\xi} ) q_4^* (\sigma^2),
\end{equation*}
where 
\begin{equation*} 
\begin{array}{l}
q_1^* ( \bm{\beta}) \sim \mathcal{N}_p \left( \bm{\beta}^*, \bm{\Sigma}^* \right), \\
q_2^* (\bm{\lambda}) \sim \displaystyle \prod_{i=1}^p \mathcal{GIG} \left( k_i^*, l^*, m^* \right), \\ 
q_3^* (\bm{\xi}) \sim \displaystyle \prod_{i=1}^p \mathcal{IG} \left( u^*, v_i^* \right), \\
q_4^* (\sigma^2) \sim \mathcal{IG} (c^*, d^*),
\end{array}
\end{equation*}
and
\begin{equation} \label{MFVBeqns2}
\begin{array}{c}
\bm{\beta}^* = \left( \bm{X}^\top \bm{X} + \bm{D}^* \right)^{-1} \bm{X}^\top \bm{y}, \hspace{.3cm} \bm{\Sigma}^* =  \mathbb{E}_{q_4^*} (\sigma^2) \left( \bm{X}^\top \bm{X} + \bm{D}^* \right)^{-1} , \\
\bm{D}^* = \textrm{diag} \left( \mathbb{E}_{q_2^*} ( \lambda_1^{-2}) \mathbb{E}_{q_3^*} ( \xi_1^{-2}), \ldots, \mathbb{E}_{q_2^*} ( \lambda_p^{-2}) \mathbb{E}_{q_3^*} (\xi_p^{-2}) \right),  \\
k_i = \mathbb{E}_{q_1^*} ( \beta_i^2 ) \mathbb{E}_{q_4^*} (\sigma^{-2}) \mathbb{E}_{q_3^*} (\xi_i^{-2}), i=1, \ldots, p, \hspace{.3cm} l^* = 2, \hspace{.3cm} m^*=  a - \frac{1}{2}, \\  
u^* = b + \frac{1}{2}, \hspace{.3cm} v_i^* = \frac{1}{2} \mathbb{E}_{q_1^*} ( \beta_i^2 ) \mathbb{E}_{q_4^*}(\sigma^{-2}) \mathbb{E}_{q_2^*} (\lambda_i^{-2}) + 1, i=1, \ldots, p, \\ 
c^* = \frac{n+p+2c}{2}, \hspace{.3cm} d^* = \frac{ \mathbb{E}_{q_1^*} ( || \bm{y} - \bm{X} \bm{\beta} ||_2^2 ) + \mathbb{E}_q( \bm{\beta}^\top \bm{D}^* \bm{\beta} ) + 2d}{2}, \\
\end{array}
\end{equation}
where $(a, b)$ are estimated from MML and $c=d=10^{-5}$. From (\ref{MFVBeqns})-(\ref{MFVBeqns2}), we can easily construct our coordinate ascent updates and compute the expectations, $\mathbb{E}_{q_2^*}(\lambda_i^{-2})$, $\mathbb{E}_{q_3^*}(\xi_i^{-2})$, $\mathbb{E}_{q_4^*} (\sigma^2)$, and $\mathbb{E}_{q_4^*} (\sigma^{-2})$ in closed form, using standard properties of the $\mathcal{GIG}$ and $\mathcal{IG}$ densities. Moreover, we also have
\begin{equation*}
\begin{array}{l}
\mathbb{E}_{q_1^*} ( \beta_i^2 ) = (\beta_i^*)^2 + \bm{\Sigma}^{*}_{ii}, \\
\mathbb{E}_{q_1^*} ( || \bm{y} - \bm{X} \bm{\beta} ||_2^2 ) = || \bm{y} - \bm{X} \bm{\beta}^* ||_2^2 + \textrm{tr} (  \bm{X}^\top \bm{X} \bm{\Sigma}^* ), \\
\mathbb{E}_q (\bm{\beta}^\top \bm{D}^* \bm{\beta}) = \displaystyle \sum_{i=1}^{p} (\beta_i^*)^2  \mathbb{E}_{q_2^*} (\lambda_i^{-2}) \mathbb{E}_{q_3^*} (\xi_i^{-2}) + \textrm{tr} ( \bm{D}^* \bm{\Sigma}^* ). 
\end{array}
\end{equation*}
At each iteration, we compute the evidence lower bound (ELBO),
\begin{equation} \label{ELBO}
\mathcal{L} = \mathbb{E}_q \log f(\bm{y}, \bm{\beta}, \bm{\lambda}, \bm{\xi}, \sigma^2) - \mathbb{E}_q \log q(\bm{\beta}, \bm{\lambda}, \bm{\xi}, \sigma^2),
\end{equation}
where $f$ is the joint density over $\bm{y}$ and all parameters and the expectations in (\ref{ELBO}) are taken with respect to the density $q$ in (\ref{MFVB}). In particular, (\ref{ELBO}) can be found by solving
\begin{equation*} 
\begin{array}{ll}
\mathcal{L} = & \mathbb{E}_{q} \log f(\bm{y} | \bm{\beta}, \sigma^2) + \mathbb{E}_{q} \log ( \bm{\beta} | \bm{\lambda}, \bm{\xi}, \sigma^2 ) + \mathbb{E}_{q} \log \pi(\bm{\xi}) + \mathbb{E}_{q} \log \pi(\bm{\xi})  \\
& + \mathbb{E}_{q} \log \pi(\sigma^2) - \mathbb{E}_{q} \log q_1^* (\bm{\beta}) - \mathbb{E}_{q} \log q_2^* (\bm{\lambda}) - \mathbb{E}_q \log q_3^* (\bm{\xi}) - \mathbb{E}_q \log q_4^*(\sigma^2).
\end{array}
\end{equation*}
Although a bit involved, an explicit expression for the ELBO can be derived in closed form. Namely, we have
\begin{align*} \label{ELBO2}
& \mathcal{L} = -\frac{n}{2} \log (2 \pi) + \frac{p}{2} + p \log 2 + p \log \Gamma(u^*) - p \log \Gamma(a) - p \log \Gamma(b) \\
& \qquad \qquad+ c \log d  - c^* \log d^* + \log \Gamma(c^*) - \log \Gamma(c) + \frac{1}{2} \log | \bm{\Sigma}^* |  \\
& \qquad \qquad - \displaystyle \sum_{i=1}^{p} \log \left[ \frac{(k_i^* / l^* )^{m^*/2}}{K_{m^*} ( \sqrt{k_i^* l^* })} \right] - u^* \displaystyle \sum_{i=1}^{p} \log v_i^* + \displaystyle \sum_{i=1}^{p} \left( \frac{k_i^*}{2}-1 \right) \mathbb{E}_{q_2^*} ( \lambda_i^2)  \\
& \qquad \qquad   + \displaystyle \sum_{i=1}^{p} (v_i^* - 1) \mathbb{E}_{q_3^*} (\xi_i^{-2}) + \frac{l^*}{2} \displaystyle \sum_{i=1}^{p} \mathbb{E}_{q_2^*}(\lambda_i^{-2}),   \numbereqn
\end{align*}
where $K_{\nu}(\cdot)$ denotes the modified Bessel function of the second kind. 

 In each step of our algorithm, we compute the ELBO (\ref{ELBO}). Convergence is assessed by computing the absolute difference, $\textrm{dif} = |\mathcal{L}^{(t)} - \mathcal{L}^{(t-1)} |$, at each iteration, and terminating the algorithm if $\textrm{dif} < \delta$, for some small tolerance $\delta > 0$. We run the MFVB algorithm until convergence or until a maximum of 1000 iterations have been reached.

To incorporate the EM algorithm for computing hyperparameters $(a, b)$ into the MFVB scheme, we solve for $(a, b)$ in (\ref{Mstep}) in every iteration of coordinate ascent algorithm, using $\mathbb{E}_{q_2^{* (t-1)}, a^{(t-1)}} \left[ \log (\lambda_i^2 ) \right]$ and $\mathbb{E}_{q_3^{* (t-1)}, b^{(t-1)}} \left[ \log (\xi_i^2) \right]$ in place of the summands in (\ref{Mstep}) at the $t$th iteration. Namely, these expectations are given by:
\begin{equation} \label{MFVBexpectations} 
\begin{array}{l}
\mathbb{E}_{q_2^{* (t-1)}, a^{(t-1)}} \left[ \log (\lambda_i^2 ) \right] = \log \left( \frac{\sqrt{k_i^{*(t-1)}}}{\sqrt{l^*}} \right) + \frac{\partial}{\partial m^{*(t-1)}} \log \left[ K_{m^{*(t-1)}} \left( \sqrt{ k_i^{*(t-1)} l^{*} } \right) \right], \\
\mathbb{E}_{q_3^{* (t-1)}, b^{(t-1)}} \left[ \log (\xi_i^2 ) \right] = \log \left( v_i^{*(t-1)} \right) - \psi \left( u^{*(t-1)} \right), \\
\end{array}
\end{equation}
where $K_{\nu} ( \cdot )$ denotes the modified Bessel function of the second kind, and $a^{(*t-1)}$, $b_i^{*(t-1)}$, $k_i^{*(t-1)}$, $l^*$, and $m^{*(t-1)}$ are taken from the $(t-1)$st iteration and defined in (\ref{MFVBeqns2}). Numerical differentiation is used to evaluate the derivative in the first equation of (\ref{MFVBexpectations}). Algorithm 2 at the end of Appendix \ref{AppB} provides the complete steps for implementing the variational EM algorithm for the self-adaptive NBP model.

Note that Step 9 in Algorithm 2 involves computing the inverse of a $p \times p$ matrix, $\bm{\Phi}^{*(t)} = (\bm{X}^\top \bm{X} + \bm{D}^{*(t)})^{-1}$. Since $\bm{D}^{*(t)}$ is a diagonal matrix, the computational cost can be substantially reduced when $p \gg n$ by invoking the Sherman-Morrison-Woodbury formula, i.e.
\begin{align*}
\bm{\Phi}^{*(t)} \gets ( \bm{D}^{*(t)} )^{-1} - ( \bm{D}^{*(t)} )^{-1} \bm{X}^\top ( \bm{I}_n + \bm{X} ( \bm{D}^{*(t)} )^{-1} \bm{X}^\top )^{-1} \bm{X} ( \bm{D}^{*(t)} )^{-1},
\end{align*}
which only involves inverting an $n \times n$ matrix, rather than $p \times p$ one. In steps 12-14 of Algorithm 2, we can also update $(k_i^{*(t)}, v_i^{*(t)}), i =  1, \ldots, p$, simultaneously in parallel to save on computing time. 

\begin{algorithm}[H]
  \caption{Monte Carlo EM algorithm for the self-adaptive NBP}
  \begin{algorithmic}[1]
    \Initialize{}
 \vspace{-.2cm}
   \State $a^{(0)} = b^{(0)} = 0.01, c = d = 10^{-5}$,  max = 100, $M = 100, J = 20000$, $\delta = 10^{-6}$, dif = 1, and $k = 0$. 
\State  Initialize $\bm{\beta}^{(0)}, \sigma^{2(0)}, \lambda_i^{2(0)}, \xi_i^{2(0)}, i = 1, \ldots, p.$
      \For{$t = 1$ to $J$}
        \State $\bm{D}^{(t)} \gets \textrm{diag} \left( \lambda_1^{2(t-1)} \xi_1^{2(t-1)}, \ldots, \lambda_p^{2(t-1)} \xi_p^{2(t-1)} \right) $
     \State Draw $\bm{\beta}^{(t)} \sim \mathcal{N}_p \left(\left( \bm{X}^\top \bm{X} + (\bm{D}^{(t)})^{-1} \right)^{-1} \bm{X}^\top \bm{y}, \sigma^{2(t-1)} \left( \bm{X}^\top \bm{X} + (\bm{D}^{(t)})^{-1} \right)^{-1}  \right)$
     \For{$i = 1$ to $p$}
    	\State Draw $\lambda_i^{2(t)} \sim \mathcal{IG} \left( a^{(k)}+ \frac{1}{2}, \frac{ \left( \beta_i^{(t)} \right)^{2}}{2 \sigma^{2(t-1)} \xi_i^{2(t-1)}}+1 \right)$
         \State  Draw $\xi_i^{2(t)} \sim \mathcal{GIG} \left( \frac{ \left( \beta_i^{(t)} \right)^2}{\sigma^{2(t-1)} \lambda_i^{2(t)}}, 2, b^{(k)} - \frac{1}{2} \right)$
      \EndFor
      \State Draw $\sigma^{2(t)} \sim \mathcal{IG} \left(\frac{n+p+2c}{2}, \frac{ ||\bm{y} - ||\bm{X} \bm{\beta}^{(t)} ||_2^2 + \left( \bm{\beta}^{(t)} \right)^\top \left( \bm{D}^{(t)} \right)^{-1} \bm{\beta}^{(t)} + 2d }{2}  \right)$
    \State \texttt{Update hyperparameters in (\ref{Mstep}).}
   \If{ $ t \textrm{ mod } M = 0$ \textbf{and} $k \leq$ max \textbf{and} dif $\geq \delta$  }
       \State $k \gets k+1$
        \State low $\gets t-M+1$
        \State high $\gets t$
      \For{$j = 1$ to $p$}
        \State $U_j\gets \frac{1}{M} \left[ \ln \left( \lambda_j^{2 \textrm{(low)}} \right) + \ldots + \ln \left( \xi_j^{2\textrm{(high)}} \right) \right]$
       \State $V_j \gets \frac{1}{M} \left[ \ln \left( \lambda_j^{\textrm{2(low)}} \right) + \ldots + \ln \left( \xi_j^{\textrm{2(high)}} \right) \right]$
   \EndFor
     \State Solve for $a$ in $-p \psi(a) - \sum_{j=1}^{p} U_j = 0$
     \State  $a^{(k)} \gets a$
     \State Solve for $b$ in $-p \psi(b) + \sum_{j=1}^{p} V_j = 0$
     \State $b^{(k)} \gets b$
     \State dif $\gets \left( a^{(k)} - a^{(k-1)} \right)^2 + \left( b^{(k)} - b^{(k-1)} \right)^2$
   \EndIf

    \EndFor
  \end{algorithmic}
\end{algorithm} 

\begin{algorithm}[H]
	\caption{Variational EM algorithm for the self-adaptive NBP model}
	\begin{algorithmic}[1]
		\Initialize{}
		\vspace{-.3cm}
		\State $l^* = 2$, $c^* = \frac{n+p+2c}{2}$, $a^{(0)} = b^{(0)} = 0.01$, $\delta = 10^{-3}$, $J = 1000$, \textbf{and} $t = 1$.
\State Initialize $d^{*(0)}$, $k_i^{*(0)}, v_i^{*(0)}$, $i=1,\ldots, p$. 
		\vspace{.5cm}
		\While{$ |\mathcal{L}^{(t)} - \mathcal{L}^{(t-1)} | \geq \delta$ \textbf{and} $1 \leq t \leq J$}
		\State \texttt{E-step: Update variational parameters in (\ref{MFVBeqns2}).}
		\State Update $m^{*(t)} \gets a^{(t-1)} - \frac{1}{2}$
		\State Update $u^{*(t)} \gets b^{(t-1)} + \frac{1}{2}$
		\State Update $\bm{D}^{*(t)} \gets \textrm{diag} \left( \mathbb{E}_{q_2^{*(t-1)}} (\lambda_1^{-2}) \mathbb{E}_{q_3^{*(t-1)}} ( \xi_1^{-2}), \ldots, \mathbb{E}_{q_2^{*(t-1)}} (\lambda_p^{-2}) \mathbb{E}_{q_3^{*(t-1)}} (\xi_p^{-2})  \right)$
		\State Update $\bm{\Phi}^{*(t)} \gets \left( \bm{X}^\top \bm{X} + \bm{D}^{*(t)} \right)^{-1}$
		\State Update $\bm{\Sigma}^{*(t)} \gets \mathbb{E}_{q_4^{*(t-1)}} (\sigma^2) \bm{\Phi}^{*(t)}  $
		\State Update $\bm{\beta}^{*(t)} \gets \bm{\Phi}^{*(t)} \bm{X}^\top \bm{y}$
		\For{$i = 1$ to $p$}
		\State Update $k_i^{*(t)} \gets \mathbb{E}_{q_1^{*(t-1)}} ( \beta_i^{2} ) \mathbb{E}_{q_4^{*(t-1)}} (\sigma^{-2}) \mathbb{E}_{q_3^{*(t-1)}} ( \xi_i^{-2} )$
		\State Update $v_i^{*(t)} \gets \frac{1}{2} \mathbb{E}_{q_1^{*(t-1)}} ( \beta_i^{2} ) \mathbb{E}_{q_4^{*(t-1)}} (\sigma^{-2}) \mathbb{E}_{q_2^{*(t-1)}} (\lambda_i^{-2} ) + 1$
		\EndFor
		\State Update $d^{*(t)} \gets \frac{ \mathbb{E}_{q_1^{*(t-1)}} \left( || \bm{y} - \bm{X} \bm{\beta}^2 ||_2^2 \right) + \mathbb{E}_{q^{(t-1)}} \left(  \bm{\beta}^\top \bm{D}^{*(t)} \bm{\beta} \right) + 2d}{2}$
		\vspace{.5cm}
		\State \texttt{M:step: Update hyperparameters in (\ref{Mstep}).}
		\State Solve for $a$ in $-p \psi(a) +  \sum_{i=1}^{p} \mathbb{E}_{q_2^{*(t-1)}} \left[ \log (\lambda_i^2) \right] = 0$
		\State  $a^{(t)} \gets a$
		\State Solve for $b$ in $-p \psi(b) -  \sum_{i=1}^{p} \mathbb{E}_{q_3^{*(t-1)}} \left[ \log (\xi_i^2) \right] = 0$
		\State  $b^{(t)} \gets b$
		\State Update $\mathcal{L}^{(t)}$, as in (\ref{ELBO2}).
		\State $t \gets t+1$     
		\EndWhile
	\end{algorithmic}
\end{algorithm}

\end{appendix}

\end{document}